\documentclass[11pt]{article}

\usepackage[T1]{fontenc}
\usepackage{lmodern}
\usepackage{microtype}
\usepackage[a4paper,margin=28mm]{geometry}
\usepackage{amsmath,amssymb,amsthm,mathtools}
\usepackage{enumitem}
\usepackage{authblk}
\usepackage{xcolor}
\usepackage{hyperref}
\usepackage[nameinlink,capitalize,noabbrev]{cleveref}
\usepackage{tikz}
 \usetikzlibrary{
	   arrows.meta,
	  decorations.pathreplacing,
	  shapes.geometric
 }
 
\definecolor{linkblue}{RGB}{28,72,120}
\hypersetup{
  colorlinks=true,
  linkcolor=linkblue,
  citecolor=linkblue,
  urlcolor=linkblue,
  pdftitle={Independent Set Discovery on Biclique-Free Graphs Is Fixed-Parameter Tractable}
}
\newtheorem{theorem}{Theorem}[section]
\newtheorem{lemma}[theorem]{Lemma}
\newtheorem{corollary}[theorem]{Corollary}
\theoremstyle{definition}
\newtheorem{openproblem}[theorem]{Open Problem}
\newtheorem{definition}[theorem]{Definition}
\crefname{lemma}{Lemma}{Lemmas}
\Crefname{lemma}{Lemma}{Lemmas}
\crefname{corollary}{Corollary}{Corollaries}
\Crefname{corollary}{Corollary}{Corollaries}

\newcommand{\dist}{\operatorname{dist}}
\newcommand{\OPT}{\operatorname{OPT}}
\newcommand{\ISD}{\textnormal{\textsc{Independent Set Discovery}}}
\newcommand{\WIT}{\textnormal{\textsc{Weighted Independent Transversal}}}

\title{\textbf{Independent Set Discovery on Biclique-Free Graphs\\
Is Fixed-Parameter Tractable}}
\author[1]{Chenghua Liu}
\author[2]{Boning Meng}

\affil[1]{Institute of Software, Chinese Academy of Sciences, Beijing, China}
\affil[2]{University of Regensburg, Regensburg, Germany}
\affil[ ]{\texttt{liuch.russell@gmail.com}, \texttt{mengboning2013@gmail.com}}

\date{}

\begin{document}
\maketitle

\begin{abstract}
\ISD{} asks whether a configuration of $k$ tokens on distinct vertices can be
transformed into an independent $k$-set by a sequence of token slides, each
moving one token to an unoccupied neighbor; only the terminal configuration
must be independent.  \ISD{} is a central problem in
solution discovery: its target is not prescribed and must be chosen together
with the token movements needed to reach it.
Fellows et al.~\cite{FellowsEtAl2026} proved it FPT in $k$ on every fixed
bounded-degeneracy class and every nowhere-dense class, leaving the
biclique-free case open.  The biclique-free setting lies substantially beyond
both regimes:
biclique-free classes can have unbounded degeneracy and even be somewhere
dense.  We resolve the open problem affirmatively.  Given
an $n$-vertex, $m$-edge graph promised to be $K_{d,d}$-free and an initial
$k$-token configuration, our deterministic algorithm computes the minimum
number of slides and, in time $2^{O(dk\log k)}(n+m)^{O(1)}$, returns an optimal
independent target and a shortest collision-free slide sequence or certifies
that no independent target is reachable.  Thus the problem is FPT in $k$ for
every fixed $d$ and uniformly FPT in $k+d$.  The proof combines an exact
minimum-cost assignment characterization of token movement with local
branching on bounded, cost-relevant \emph{cheap prefixes} of candidate lists.
The method also yields an exact FPT algorithm for weighted independent
transversals on $K_{d,d}$-free graphs with overlapping candidate sets, direct
exact FPT algorithms for both problems on bounded-degeneracy graphs, an
edge-count-sensitive XP algorithm, sharper bounds for bounded $s$-codegree and
unbalanced biclique exclusion, and an exact extension to weighted movement on
a separate directed graph.
\end{abstract}
\section*{Acknowledgments}
The authors used OpenAI's ChatGPT in preparing
this manuscript, including for language editing and \LaTeX{} preparation.
ChatGPT also contributed to the exploratory development of most of the
arguments. All claims and proofs were independently checked and finalized
by the authors, who take full responsibility for the content.
\section{Introduction}

Solution discovery asks for a feasible solution that is close to a prescribed,
possibly infeasible, state.  Fellows et al.~\cite{FellowsEtAl2026} formalized
this viewpoint through reconfiguration moves.  In \ISD{}, the state is a set
$S$ of $k$ occupied vertices of a graph $G$.  A move slides one token along an
edge to an unoccupied vertex, and the objective is to reach an independent
$k$-set using at most a given number of moves.  Only the terminal configuration
must be independent.  This endpoint-only feasibility requirement distinguishes
discovery from independent-set reconfiguration and is essential to the
algorithm developed here.

The framework captures a natural repair problem: the target is not prescribed,
and the cost is the actual movement needed to reach a feasible state.  \ISD{}
is one of the four fundamental problems used by Fellows et al.\ to introduce
solution discovery~\cite{FellowsEtAl2026}.  It is a basic test case for the
framework: pairwise nonadjacency is the canonical packing constraint, yet the
target independent set must be chosen jointly with the movement of the tokens.
Consequently, endpoint feasibility is coupled to global shortest-path
information from the initial configuration.

Since its introduction, solution discovery has developed along several
complementary directions.  Grobler et al.\ extended the framework to spanning
trees, shortest paths, matchings, and cuts, and classified these problems
under several token-movement models~\cite{GroblerEtAlICALP2024}.  Subsequent
work initiated the systematic study of kernelization, including polynomial
kernels for \ISD{} parameterized by $k$ on nowhere-dense
classes~\cite{GroblerEtAl2024}.  Saito et al.\ obtained clique-width-based
algorithms for \ISD{}, hardness results on chordal and diameter-two graphs,
and polynomial-time solvability on split graphs~\cite{SaitoEtAl2026}.  The
framework has also led to logical meta-theorems, cost--value objectives, and
models separating the feasibility graph from the movement
graph~\cite{BousquetEtAl2025,GerhardEtAl2026,vonBergenEtAl2026}.  This breadth,
and the repeated appearance of \ISD{} across these developments, make its
structural parameterized complexity a basic question for the emerging theory.

\subsection{The biclique-free frontier}

Fellows et al.~\cite{FellowsEtAl2026} proved that \ISD{} is NP-complete on
planar graphs of maximum degree four.  Parameterized by the combined quantity
$k+b$, it remains W[1]-hard even when $C_4,\ldots,C_p$ are excluded as induced
subgraphs for any fixed $p$; parameterized by the slide budget $b$ alone, it
is W[1]-hard on 2-degenerate bipartite graphs.  On the positive side, they
gave an XP algorithm parameterized by treewidth and proved FPT in $k$ on every
fixed bounded-degeneracy class and on every nowhere-dense class.  Subsequent
work broadened the surrounding landscape through kernelization, width-based
upper and lower bounds, and logical
meta-theorems~\cite{GroblerEtAl2024,SaitoEtAl2026,BousquetEtAl2025}.

Biclique exclusion is a natural and substantially broader frontier.  Every
$a$-degenerate graph is $K_{a+1,a+1}$-free, but the converse fails sharply.
For every prime $q$, the incidence graph of the projective plane
$PG(2,q)$ is $C_4$-free, $(q+1)$-regular, and hence has degeneracy $q+1$,
despite having $2(q^2+q+1)$ vertices~\cite{KuhnLokshtanovMiller2024}.
Moreover, the $1$-subdivision of
$K_t$ is $C_4$-free for every $t$, and $K_t$ is its $1$-shallow minor; hence
the class of $C_4$-free graphs is somewhere dense.
Thus neither earlier FPT theorem covers the full biclique-free regime.  At
the same time, excluding a fixed biclique imposes strong extremal
structure~\cite{KovariSosTuran1954}.  A positive result would therefore show
that this extremal sparsity alone suffices for fixed-parameter
tractability, well beyond the previously understood sparse classes.

There is also a concrete obstacle to extending the earlier algorithm.  It
constructs a $k$-independence-covering family---a family $\mathcal F$ of
independent sets such that every independent set of size at most $k$ is
contained in some member of $\mathcal F$---and solves one minimum-cost
matching instance for each
member~\cite{FellowsEtAl2026,LokshtanovEtAl2020}.  Kuhn, Lokshtanov, and Miller
proved that, for every function $f$ and every fixed $\varepsilon>0$, no bound
of the form $f(k)n^{k/4-\varepsilon}$ holds for such families on all
$C_4$-free graphs~\cite{KuhnLokshtanovMiller2024}.  Since $C_4=K_{2,2}$,
explicitly enumerating these global families cannot yield an FPT algorithm
even in the first nontrivial biclique-free case.  An FPT resolution therefore
cannot proceed by explicitly enumerating such a global covering family and
needs a different mechanism for representing candidate targets.

Against this background, Fellows et al.\ explicitly posed the following
question~\cite{FellowsEtAl2026}.

\begin{openproblem}[Fellows et al.~\cite{FellowsEtAl2026}]
\label{op:biclique-free-isd}
For every fixed integer $d\ge2$, is \ISD{} fixed-parameter tractable
parameterized by the number $k$ of tokens on $K_{d,d}$-free graphs?
\end{openproblem}

\subsection{Our results}
For an initial configuration $S$, let $\OPT(G,S)$ be the minimum number of
slides needed to reach an independent set of size $|S|$, and put
$\OPT(G,S)=+\infty$ if no such set is reachable.  Our main result resolves
\cref{op:biclique-free-isd} and strengthens the requested result in two ways:
the dependence is uniform in $k+d$, and the algorithm computes the optimum
together with an optimal slide sequence.

\begin{theorem}[Main theorem]
\label{thm:main}
Let $d,k\ge2$ be integers.  Given an $n$-vertex, $m$-edge graph $G$ promised
to be $K_{d,d}$-free and a set $S\in\binom{V(G)}{k}$, one can deterministically
compute $\OPT(G,S)$.  If $\OPT(G,S)<+\infty$, the algorithm returns an
independent $k$-set $T$ and a collision-free sequence of exactly
$\OPT(G,S)$ token slides from $S$ to $T$; otherwise, it reports that no
independent $k$-set is reachable from $S$.  Its running time is
\[
  2^{O(dk\log k)}(n+m)^{O(1)}.
\]
Consequently, the budgeted decision version is FPT in $k$ for every fixed
$d$ and uniformly FPT in $k+d$.  It is obtained by comparing $\OPT(G,S)$
with the binary-encoded budget $b$, which adds only polynomial dependence on
the encoding length of $b$.
\end{theorem}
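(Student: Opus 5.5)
We follow the two ingredients announced in the abstract: an exact assignment characterization of movement cost, and local branching on cheap prefixes.

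\emph{Step 1: assignment characterization.} For $S$ and a target $k$-set $T$, let the movement cost be $c(S,T)=\min_{\pi}\sum_{s\in S}\dist(s,\pi(s))$, the minimum over bijections $\pi\colon S\to T$, with distances in $G$. We show that the minimum number of collision-free slides from $S$ to $T$ equals $c(S,T)$ whenever all relevant distances are finite; this needs only that the target is not required to be independent during the process. The lower bound is immediate. For the upper bound we take an optimal assignment along shortest paths and use the standard token-swapping and exchange argument: if a token's path is blocked by another token, we swap the identities of the two tokens, which does not increase the total length for an optimal, uncrossed assignment. Induction on the total cost then gives a sequence of exactly $c(S,T)$ slides, and it can be produced in polynomial time. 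Hence $\OPT(G,S)=\min\{c(S,T): T \text{ independent}\}$, and the problem reduces to a weighted independent transversal. Guess which token is assigned to which ``slot''. Token $i$ has the candidate list $V$ with weight $\dist(s_i,v)$, and we seek pairwise distinct, pairwise nonadjacent $v_1,\dots,v_k$ minimizing $\sum_i\dist(s_i,v_i)$.

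\emph{Step 2: cheap prefixes and branching.} For each token $i$, sort the vertices by $\dist(s_i,\cdot)$ with a fixed tie-break. The core claim is that some optimal solution picks, for each $i$, a vertex within the first $N=N(k,d)=k^{O(d)}$ entries of $i$'s list after the vertices already ``killed'' by earlier choices are removed. The justification is exchange plus K\H{o}v\'ari--S\'os--Tur\'an. Suppose the chosen $v_i$ lies beyond the prefix $P_i$ of size $N$. Then every vertex of $P_i$ is blocked: it either equals another $v_j$ or is adjacent to one of the at most $k-1$ other chosen vertices. Otherwise we could swap in a vertex that is at least as cheap. We then need a bound on how many vertices can be adjacent to a set $X$ of $k-1$ vertices while the configuration stays unrepairable. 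Instead of bounding neighbourhoods, which is impossible since degrees are unbounded, we branch. Pick any vertex $u$ of $P_i$ and consider two cases. Either $u=v_i$, or $u$ is blocked by some $j$, which we guess ($k$ options), recording the constraint ``$v_j\in N[u]$''. After $d$ recorded constraints on the same $j$ coming from distinct $u$'s, candidate $v_j$ must lie in a common neighbourhood of $d$ vertices. Combined with $K_{d,d}$-freeness, this bounds the number of $j$'s that can accumulate large common-neighbourhood demands, since at most $d-1$ vertices are adjacent to $d$ fixed vertices. So each $v_j$ can be forced to fewer than $d$ explicit candidates, and we guess among them. This yields a search tree of depth $O(dk)$ with branching $O(k)$, i.e.\ $2^{O(dk\log k)}$ leaves. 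At each leaf the remaining choices are either fixed or restricted to prefix sets. We then check independence and take the minimum cost, using polynomial-time matching where the lists are independent of each other.

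\emph{Main obstacle.} The hard part is Step 2's invariant. We must show rigorously that the branching measure, namely the number of tokens times the number of recorded blocking constraints per token (capped at $d$), strictly decreases. We must also show that the $K_{d,d}$-free bound turns $d$ common blocked neighbours into at most $d-1$ candidates without losing an optimal solution under ties. Overlapping candidate lists for different tokens complicate distinctness, which is why cheap prefixes are taken with fixed tie-breaking. Finally, comparing the computed $\OPT(G,S)$ with $b$ gives the decision version.
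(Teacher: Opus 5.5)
\paragraph{Verdict: genuine gap in Step~2.}
Your Step~1 is essentially the paper's assignment-distance lemma and static reformulation, and it is fine. No ``slot'' guessing is needed, since the tuple \((v_1,\dots,v_k)\) indexed by tokens already encodes the bijection. The gap is Step~2, which carries the whole theorem and is not established.

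\begin{enumerate}
\item[(i)] The split ``either \(u=v_i\) or \(u\) is blocked by some \(j\)'' for an arbitrary \(u\in P_i\) is not exhaustive: \(v_i\) can be another vertex of \(P_i\) while \(u\) is unblocked. The valid dichotomy is: either \(v_i\in P_i\), which requires branching over all \(N\) vertices of \(P_i\) (not \(O(k)\) options), or \(P_i\) is full and each of its vertices lies in \(N[v_j]\) for some other unassigned \(j\). This dichotomy must be applied to one optimum fixed in advance, e.g.\ an optimum minimizing total rank in the label orders. ``Swapping in a vertex at least as cheap'' gives no contradiction under ties, and it could change the tracked solution between levels.
\item[(ii)] Guessing the blocker of an arbitrary \(u\) need not make progress, since \(u\) may already be recorded for that \(j\). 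You need a fresh \(u\) and a decreasing potential. Moreover, the recorded constraint is \(v_j\in N[u]\) and the lists overlap. So \(d\) constraints only confine \(v_j\) to \(\bigcap_{\ell}N[u_\ell]\). That set has at most \(2d-1\) vertices but can have \(d\) or more (e.g.\ when the \(u_\ell\) form a clique), not ``fewer than \(d\)''.
\item[(iii)] At your leaves, labels merely ``restricted to prefix sets'' still pose a weighted independent-transversal problem. A matching enforces distinctness but not nonadjacency between different lists, and nothing in your tree makes those lists mutually nonadjacent.
\end{enumerate}
Your own ``main obstacle'' paragraph concedes that the measure and the tie handling are unproved, so the core argument is missing.

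\paragraph{How the paper does it.}
The paper avoids blocker guessing and uses a budget-driven dichotomy with prefixes of size \(M(d,r)=\Theta_d(r^d)\). A short list is enumerated. If all prefixes are full and their maximum costs fit the residual budget, K\H{o}v\'ari--S\'os--Tur\'an plus the Wanless--Wood criterion certify a transversal inside the prefixes, so no leaf computation is needed. Otherwise every feasible completion meets some prefix. This gives depth \(k\) and branching \(kM(d,k)\), with the optimum found by binary search and self-reduction.

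\paragraph{Completing your route.}
Your blocker idea can be completed along its own, more elementary route, but only with the missing ingredients.
\begin{itemize}
\item Fix the canonical optimum above and use prefixes of size \(dk\); enumerate any list with fewer than \(dk\) vertices in full.
\item When all lists are long, every recorded set satisfies \(|R_j|\le d-1\), so a fresh \(u\in P_i\setminus\bigcup_j R_j\) exists. Branch on \(v_i\in P_i\), or on that \(u\) together with a guessed blocker \(j\), adding \(u\) to \(R_j\).
\item Once \(|R_j|=d\), the list of \(j\) has at most \(2d-1<dk\) vertices and is enumerated.
\item The potential \(\sum_{j\text{ unassigned}}(d+1-|R_j|)\) bounds the depth by \((d+1)k\).
\end{itemize}
Every leaf is then a complete feasible tuple, and there are \(2^{O(dk)}(dk)^k k^{dk}=2^{O(dk\log k)}\) leaves.
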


The proof yields three further contributions.
\begin{enumerate}[leftmargin=2em,itemsep=0.35em]
\item \emph{A general weighted static theorem.}
We solve a weighted independent-transversal problem in which the $k$
candidate sets may overlap and the cost of a vertex depends on its label.
\Cref{thm:weighted} gives the exact optimization algorithm, and
\cref{cor:algorithmic-transfer} transfers both the optimum and a witness back
to \ISD{}.

\item \emph{A direct proof for bounded-degeneracy graphs.}
The same search gives a short direct proof of FPT on bounded-degeneracy
classes.  In contrast to the covering-family proof, it uses only a local edge
count and prefix branching.  It also gives an explicit uniform running time
in the number of tokens and the degeneracy, works for the weighted problem,
and yields exact optimization and sequence reconstruction.  Thus the method
recovers the earlier tractability theorem while strengthening its algorithmic
formulation; see \cref{thm:degenerate}.

\item \emph{Further algorithms and sharper bounds.}
The local conflict bound yields an edge-count-sensitive XP algorithm on
arbitrary graphs and sharper uniform FPT bounds for bounded $s$-codegree (the
maximum common-neighborhood size of an $s$-set) and unbalanced $K_{s,t}$
exclusion.
The latter bound depends exponentially on the smaller side $s$, whereas
applying the balanced theorem with $d=t$ would put the larger side in the
extremal exponent.  Finally, the assignment argument extends to a two-graph
model with a biclique-free feasibility graph and an
arbitrary directed, nonnegatively weighted movement graph.
These consequences are stated in
\cref{thm:edge-number,thm:codegree,thm:unbalanced-biclique,cor:two-graph}.
\end{enumerate}

\subsection{Proof overview}

The proof separates movement from terminal feasibility.  Temporarily label
the tokens by their starting positions.  For any fixed target $T$, the minimum
number of slides needed to reach $T$ equals the cost of a minimum assignment
of starting positions to vertices of $T$, where assignment costs are
shortest-path distances.  A simple exchange argument handles occupied
vertices along the assigned paths: whenever a token blocks another token's
path, the two tokens exchange destinations in the analysis.  The assignment
cost does not increase, and every actual slide still enters an unoccupied
vertex.  Thus the dynamic problem reduces exactly to selecting pairwise
nonadjacent destinations of minimum total assignment cost.

We formulate that static task as a weighted independent-transversal problem.
Each starting token is a coordinate, its reachable vertices are possible
destinations, and two selected destinations conflict if they are equal or
adjacent.  The destination lists may overlap and costs depend on the
coordinate, so a direct application of a standard transversal theorem is not
enough.

The algorithm uses a \emph{cheap-prefix dichotomy}.  Suppose that $r$ labels
remain.  For each label, form a prefix of its $(dr)^{O(d)}$ cheapest
destinations compatible with the partial assignment.  If some residual
candidate list is shorter than this prefix length, we branch on its entire list.
Otherwise all prefixes are full.  If the sum of their maximum costs fits the
residual budget, the K\H{o}v\'ari--S\'os--Tur\'an bound and the
independent-transversal criterion
of Wanless and Wood~\cite{KovariSosTuran1954,WanlessWood2022} guarantee
compatible representatives.  If this sum exceeds the budget, every feasible
solution must select a vertex from at least one prefix, because a solution
avoiding all prefixes would have cost at least the sum of their coordinate-wise
lower bounds.  Branching over the prefixes is therefore exhaustive.  Each branch
fixes one label, so the depth is at most $k$ and the branching factor depends
only on $k+d$.

The decisive simplification is locality.  Extremal sparsity is invoked only
on bounded, cost-relevant prefixes, rather than through a global family that
must cover every small independent set.  This is precisely why the method
continues to work in the $C_4$-free setting where small global covering
families need not exist.

Viewed statically, our problem is a graph-metric instance of \emph{Dispersion}
in the movement-minimization
framework~\cite{DemaineEtAl2009,DemaineHajiaghayiMarx2014,BiloEtAl2016}.  Our
contribution in this setting is an exact FPT algorithm under biclique-free
terminal conflicts, allowing overlapping destination lists and
label-dependent costs and reconstructing an optimal token-slide sequence.

\paragraph{Organization.}
\Cref{sec:preliminaries} defines the discovery model and the weighted static
problem.  \Cref{sec:assignment} proves the assignment-distance equivalence.
\Cref{sec:algorithm} then gives the biclique-free cheap-prefix algorithm and
completes the proof of \cref{thm:main}.  Finally,
\cref{sec:sparsity-extensions} develops the additional sparse-graph and
weighted-movement consequences.

\section{Preliminaries}
\label{sec:preliminaries}

We first formalize the discovery model and then define the weighted static
problem used by the algorithm.

\subsection{Configurations and Independent Set Discovery}

\begin{definition}[Configuration and token slide]
\label{def:configuration-slide}
Let $G$ be a finite simple undirected graph and let $k$ be a nonnegative
integer.  A \emph{$k$-token configuration} is a set $Q\subseteq V(G)$ with
$|Q|=k$.  Its vertices are \emph{occupied}.  For two $k$-token configurations
$Q$ and $Q'$, write $Q\rightarrow_G Q'$ if there are $u\in Q$ and
$v\in V(G)\setminus Q$ such that $uv\in E(G)$ and
\[
  Q'=(Q\setminus\{u\})\cup\{v\}.
\]
The transition $Q\rightarrow_G Q'$ is a \emph{token slide}.  A
\emph{collision-free slide sequence} of length $\ell$ is a sequence
$Q_0,Q_1,\ldots,Q_\ell$ of $k$-token configurations such that
$Q_{j-1}\rightarrow_G Q_j$ for every $j\in\{1,\ldots,\ell\}$.
\end{definition}

Configurations are sets, so a slide must enter an unoccupied vertex.  The
tokens are unlabeled; labels will be introduced only as an algorithmic device
for assigning starting positions to destinations.

\begin{definition}[Independent Set Discovery]
\label{def:isd}
An instance of \ISD{} is a triple $(G,S,b)$, where $G$ is a finite simple
undirected graph, $S\subseteq V(G)$ is an initial configuration, and $b$ is a
nonnegative integer encoded in binary.  Let $k=|S|$.  The instance is a
yes-instance if there exist an integer $\ell$ satisfying $0\le\ell\le b$, an
independent $k$-token configuration $T$, and a collision-free slide sequence
\[
  S=Q_0,Q_1,\ldots,Q_\ell=T.
\]
Only the terminal configuration $Q_\ell=T$ is required to be independent.
The optimization version omits $b$ and, given $(G,S)$, asks for the minimum
possible $\ell$; its value is $\OPT(G,S)$, with $\OPT(G,S)=+\infty$ if no
independent $k$-set is reachable from $S$.
\end{definition}

\subsection{Numbers, graphs, and distances}

We write $\mathbb N=\{1,2,\ldots\}$, $\mathbb N_0=\mathbb N\cup\{0\}$, and
$[q]=\{1,\ldots,q\}$ for $q\in\mathbb N$.  All graphs are finite, simple,
and undirected unless a directed graph is explicitly mentioned.  For a graph
$G$, its vertex and edge sets are $V(G)$ and $E(G)$; we usually write
$n=|V(G)|$ and $m=|E(G)|$.  A set $I\subseteq V(G)$ is \emph{independent} if
no edge of $G$ has both endpoints in $I$.

For $X\subseteq V(G)$, let
\[
N_G(X)=\{v\in V(G)\setminus X: uv\in E(G)\text{ for some }u\in X\},
\qquad N_G[X]=X\cup N_G(X).
\]
For a single vertex, we write $N_G(v)$ and $N_G[v]$.  The subgraph induced by
$X$ is denoted by $G[X]$.  For disjoint $X,Y\subseteq V(G)$, let $e_G(X,Y)$
be the number of edges with one endpoint in $X$ and the other in $Y$.  Graph
subscripts are omitted when the underlying graph is clear.

A path has length equal to its number of edges.  For $u,v\in V(G)$, the
distance $\dist_G(u,v)$ is the length of a shortest $u$--$v$ path, and is
defined to be $+\infty$ when $u$ and $v$ lie in different connected
components.  In an unweighted graph, all distances from
one source, together with corresponding shortest-path predecessors, can be
computed by breadth-first search in $O(n+m)$ time.

For $d\in\mathbb N$, $K_{d,d}$ is the complete bipartite graph with $d$
vertices in each part.  A graph is $K_{d,d}$-free when it has no subgraph
isomorphic to $K_{d,d}$.  This is stronger than forbidding an induced
$K_{d,d}$.  Edges within either side of the biclique are immaterial because
the forbidden copy need not be induced.

For our biclique-free algorithms, $d$ is part of the input and
$K_{d,d}$-freeness is a promise: the algorithms need not verify it.

\subsection{Independent transversals and overlapping candidate sets}

Let $C_1,\ldots,C_r$ be pairwise disjoint vertex sets in a graph $H$.  An
\emph{independent transversal} is an independent set containing exactly one
vertex from each $C_i$.  We will use a variant in which each label $i$ has a
\emph{candidate set} $A_i$ of admissible vertices, and these candidate sets
are not required to be disjoint.  The intended interpretation is that label
$i$ represents a token and $A_i$ contains its possible destinations; the same
vertex may therefore belong to the candidate sets of several tokens.  A valid
selection chooses one vertex $x_i\in A_i$ for each label so that the selected
vertices are distinct and pairwise nonadjacent.  The following definition
adds coordinate-dependent costs and a budget to this version with overlapping
candidate sets.

\begin{definition}[Weighted independent transversal]
\label{def:wit}
For $k\in\mathbb N$, an instance
$\mathcal I=(G,(A_i,c_i)_{i\in[k]},B)$ of the decision version of \WIT{}
consists of
\begin{itemize}[leftmargin=2em,itemsep=0.15em]
\item a graph $G$;
\item an explicitly listed finite candidate set $A_i\subseteq V(G)$ for every
label $i\in[k]$, where candidate sets may overlap;
\item a nonnegative rational cost $c_i(v)$ for every $v\in A_i$; and
\item a nonnegative rational budget $B$.
\end{itemize}
A \emph{feasible transversal} is a tuple $(x_1,\ldots,x_k)$ with
$x_i\in A_i$ for every $i$, all entries distinct, and no two entries adjacent
in $G$.  Its cost is $\sum_{i=1}^k c_i(x_i)$.  The decision problem asks
whether a feasible transversal of cost at most $B$ exists.  An optimization
instance omits $B$ and asks for a minimum-cost feasible transversal, or for a
report that none exists.  We write $|\mathcal I|$ for the total binary
encoding length in either version.  We use
\emph{label} and \emph{coordinate} interchangeably for an index $i\in[k]$.
\end{definition}

All integers and rational costs in a \WIT{} instance are encoded in binary;
the same holds for the budget $B$ when it is present.  Arithmetic operations
and comparisons have polynomial bit complexity.

In the reduction from \ISD{}, $A_i$ will contain the vertices reachable from
the initial position $s_i$, with $c_i(v)=\dist_G(s_i,v)$.

\section{From token slides to weighted transversals}
\label{sec:assignment}

To reduce \ISD{} to \WIT{}, we temporarily label the tokens and assign one
destination to each label.  We now prove that the resulting static assignment
cost is exactly the number of slides needed by the original unlabeled tokens.
The key freedom is that, if one token blocks another token's assigned shortest
path, the two tokens may exchange their assigned destinations.

All costs in this section are interpreted in the extended nonnegative
integers.  We use the conventions
\[
a+(+\infty)=+\infty
\qquad\text{and}\qquad
\min\varnothing=+\infty.
\]

For two sets $Q,T\subseteq V(G)$ of equal cardinality, define their
\emph{assignment distance}
\[
\mu_G(Q,T)=
\min_{\pi:Q\to T\text{ bijective}}
\sum_{u\in Q}\dist_G(u,\pi(u)).
\]
Equivalently, $\mu_G(Q,T)$ is the cost of a minimum-weight perfect matching
between the starting vertices $Q$ and target vertices $T$, where the edge
$ut$ has weight $\dist_G(u,t)$.  An infinite weight means that the two
vertices lie in different connected components.
The following matching characterization is known in closely related
unlabeled-pebble models~\cite{CalinescuDumitrescuPach2008,FellowsEtAl2026}.
The proof is constructive.  Its central idea is to follow an assigned
shortest path only to its first unoccupied vertex.  If another token blocks
that path, the two tokens exchange their assigned destinations in the
analysis; this exchange is bookkeeping and does not contribute to the number of slides.

\begin{lemma}[Assignment-distance lemma]
\label{lem:assignment}
Let $G$ be a finite simple undirected graph, and let $Q,T\subseteq V(G)$ have
equal cardinality.  The minimum length of a collision-free token-slide
sequence from $Q$ to $T$ equals $\mu_G(Q,T)$, where this minimum is
$+\infty$ if no such sequence exists; intermediate configurations are
unrestricted beyond collision-freedom.
\end{lemma}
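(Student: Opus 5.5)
We prove the two inequalities separately.

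\emph{Lower bound.} Let $Q=Q_0,Q_1,\ldots,Q_\ell=T$ be a collision-free slide sequence. We build bijections $\pi_j:Q\to Q_j$ by tracking labelled tokens. Start with $\pi_0=\mathrm{id}$. If $Q_j$ arises from $Q_{j-1}$ by moving $u$ to $v$, then $\pi_j$ agrees with $\pi_{j-1}$ except that the preimage of $u$ is now sent to $v$. A token that moves $t$ times traverses a walk of length $t$, so $\dist_G(q,\pi_\ell(q))$ is at most the number of slides of that token. Summing over all tokens gives
\[
\mu_G(Q,T)\le\sum_{q\in Q}\dist_G(q,\pi_\ell(q))\le\ell .
\]
In particular, if $\mu_G(Q,T)=+\infty$, then no sequence exists.

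\emph{Upper bound.} We proceed by induction on $\mu=\mu_G(Q,T)<+\infty$, proving that a collision-free sequence of exactly $\mu$ slides exists. If $\mu=0$, then every token is assigned to its own position, so $Q=T$. Otherwise, fix an optimal bijection $\pi$ and a token $u$ with $\pi(u)\neq u$, together with a shortest path $u=p_0,p_1,\ldots,p_t=\pi(u)$.
\begin{itemize}
\item If $p_1\notin Q$, slide $u$ to $p_1$. The new assignment sends $p_1$ to $\pi(u)$, and $\dist_G(p_1,\pi(u))=t-1$, so $\mu_G(Q',T)\le\mu-1$. Combined with the lower bound $\mu\le 1+\mu_G(Q',T)$, this gives $\mu_G(Q',T)=\mu-1$. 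Induction then yields a total of exactly $\mu$ slides.
\item If $p_1\in Q$, say $p_1=w$, we use the exchange. Let $u'=w$ take over the remainder of the path: set $\pi'(w)=\pi(u)$ and $\pi'(u)=\pi(w)$. Then
\[
\dist_G(w,\pi(u))=t-1
\qquad\text{and}\qquad
\dist_G(u,\pi(w))\le 1+\dist_G(w,\pi(w)).
\]
The total cost of $\pi'$ does not increase, so $\pi'$ is also optimal. Moreover, $w$ now sits on a shortest path to its new target, and that path is one edge shorter than before.
\end{itemize}

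To make the second case terminate cleanly, I would avoid a single exchange and instead choose, from the start, the \emph{last} occupied vertex $p_i$ on the path with $i<t$ such that $p_{i+1}$ is unoccupied. Such an $i$ exists because the path starts at an occupied vertex. One subtlety is the endpoint: if $p_t$ is itself occupied, then reassign more carefully, or pick the pair $(u,\text{path})$ differently.

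Here is the cleaner approach. Among all optimal bijections and all non-fixed tokens, consider any shortest path from $u$ to $\pi(u)$ and let $j$ be the largest index with $p_j\in Q$. Reassign destinations along the chain of occupied vertices by rerouting, and apply the exchange inequality repeatedly along $p_0,\ldots,p_j$. This lets the token at $p_j$ inherit the target $\pi(u)$ at cost $t-j$, while the others shift their targets backward. If $j<t$, the token at $p_j$ can slide to the free vertex $p_{j+1}$, decreasing $\mu$ by one, and we recurse.

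If $j=t$, then $\pi(u)\in Q$ is occupied by some token $z$ whose target is $\pi(z)\neq\pi(u)$. In that case we instead consider $z$ and repeat. Since $\pi$ is a bijection and non-fixed points exist, following $u\mapsto$ (the occupant of $\pi(u)$) must eventually reach a token whose target is unoccupied. Indeed, $T\setminus Q\neq\varnothing$ whenever $Q\ne T$, so some token has a target outside $Q$, and we start from such a token.

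The main obstacle is exactly this bookkeeping. We need to verify that the repeated exchanges preserve optimality and yield a token adjacent to a free vertex on a shortest path toward its current target. The verification uses only the triangle inequality $\dist_G(p_a,\pi(p_b))\le (b-a)+\dist_G(p_b,\pi(p_b))$ along the path, combined with the minimality of $\pi$.
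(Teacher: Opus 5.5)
Your final argument is correct and is essentially the paper's proof: pick a token whose target lies in $T\setminus Q$, use the triangle inequality along a shortest path to pass that target, without increasing the optimal assignment cost, to an occupied path vertex adjacent to a free one, slide, and induct on $\mu_G(Q,T)$. The paper takes the predecessor of the first unoccupied vertex and uses a single two-token swap, whereas you take the last occupied vertex $p_j$ and a chain of swaps (a single swap of $p_0$ and $p_j$ would also do). Your exploratory detours can be dropped: following occupants of targets is not guaranteed to terminate by bijectivity alone (only optimality of $\pi$ rules out cycles), and your direct choice of a token with an unoccupied target makes that step unnecessary.
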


\begin{proof}
Temporarily label every token by its initial vertex.  Any slide sequence
induces a bijection $\pi:Q\to T$.  A token starting at $u$ traverses at least
$\dist_G(u,\pi(u))$ edges, so every sequence has length at least
$\mu_G(Q,T)$.  In particular, no sequence exists when
$\mu_G(Q,T)=+\infty$.

Suppose that $\mu_G(Q,T)<+\infty$.  We prove by induction on $\mu_G(Q,T)$
that a sequence of exactly $\mu_G(Q,T)$ slides exists.  The case
$\mu_G(Q,T)=0$ is immediate.  Otherwise $Q\ne T$.  Choose an unoccupied
target $t\in T\setminus Q$, fix an optimal bijection $\pi:Q\to T$, and let
$s=\pi^{-1}(t)$ be the token assigned to $t$.  Along a shortest
$s$--$t$ path, let $v$ be the first unoccupied vertex and let $u$ be its
predecessor.  Such a vertex exists because $t\notin Q$.  By the choice of
$v$, the predecessor $u$ is occupied and $uv\in E(G)$.

If $u\ne s$, let $t_u=\pi(u)$ and exchange the destinations assigned to the
tokens at $s$ and $u$.  Because $u$ lies on a shortest $s$--$t$ path, the
triangle inequality gives
\begin{align*}
 \dist_G(u,t)+\dist_G(s,t_u)
 &\le \dist_G(u,t)+\dist_G(s,u)+\dist_G(u,t_u)\\
 &=\dist_G(s,t)+\dist_G(u,t_u).
\end{align*}
Thus the exchanged assignment costs no more than the optimal assignment
$\pi$, and hence it is also optimal.  If $u=s$, no exchange is needed.  In
either case, we now have an optimal assignment that sends the token at $u$
to $t$.

Slide that token from $u$ to the empty neighbor $v$, and write
$Q'=(Q\setminus\{u\})\cup\{v\}$.  The suffix of the chosen path from $v$ to
$t$ is shortest, so replacing source $u$ by source $v$ in the current
assignment gives
\[
 \mu_G(Q',T)\le \mu_G(Q,T)-1.
\]
Conversely, take an optimal assignment from $Q'$ to $T$ and replace its
source $v$ by the adjacent source $u$, keeping the same destination.  The
triangle inequality gives
\[
 \mu_G(Q,T)\le 1+\mu_G(Q',T).
\]
Therefore $\mu_G(Q',T)=\mu_G(Q,T)-1$.  The induction hypothesis transforms
$Q'$ into $T$ in that many further slides.  Prepending the slide $u\to v$
produces an optimal sequence from $Q$ to $T$.  Every actual slide enters an
unoccupied vertex, so the constructed sequence is collision-free.
\end{proof}

\begin{corollary}[Static endpoint formulation]
\label{cor:static-formulation}
Let $(G,S,b)$ be an instance of \ISD{}, where
$S=\{s_1,\ldots,s_k\}$, and define
$c_i:V(G)\to\mathbb N_0\cup\{+\infty\}$ by
$c_i(v)=\dist_G(s_i,v)$.  The instance is feasible if and only if there are
distinct, pairwise nonadjacent vertices $x_1,\ldots,x_k\in V(G)$ such that
\[
 \sum_{i=1}^k c_i(x_i)\le b.
\]
Moreover,
\begin{equation}
 \OPT(G,S)=
 \min_{\substack{x_1,\ldots,x_k\in V(G)\text{ distinct}\\
                  \{x_i,x_j\}\notin E(G)\ (i\ne j)}}
 \sum_{i=1}^k c_i(x_i).
 \label{eq:global-opt-static}
\end{equation}
\end{corollary}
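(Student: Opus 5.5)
The corollary follows from \cref{lem:assignment} applied to each candidate terminal configuration, together with a reindexing that converts bijections into labeled tuples. Write $\mathcal T$ for the family of independent $k$-subsets of $V(G)$. By \cref{def:isd}, $\OPT(G,S)$ is the minimum over $T\in\mathcal T$ of the minimum length of a collision-free slide sequence from $S$ to $T$, with $\min\varnothing=+\infty$. By \cref{lem:assignment}, that inner minimum equals $\mu_G(S,T)$. Hence
\[
\OPT(G,S)=\min_{T\in\mathcal T}\ \min_{\pi:S\to T\text{ bijective}}\ \sum_{i=1}^k \dist_G(s_i,\pi(s_i)).
\]

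Next I identify this double minimum with the right-hand side of \eqref{eq:global-opt-static}. A pair $(T,\pi)$ yields the tuple $x_i=\pi(s_i)$. Its entries are distinct because $\pi$ is a bijection, and pairwise nonadjacent because $T$ is independent. Its cost is $\sum_i c_i(x_i)$, the same as the cost of $(T,\pi)$. Conversely, a tuple of distinct, pairwise nonadjacent vertices $x_1,\ldots,x_k$ gives $T=\{x_1,\ldots,x_k\}$, which is an independent $k$-set because the entries are distinct. It also gives the bijection $s_i\mapsto x_i$, again with equal cost. Both index sets therefore realize the same multiset of values in $\mathbb N_0\cup\{+\infty\}$, so the minima agree, including the case where both are $+\infty$. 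Both minima are over finite sets and hence attained.

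The feasibility statement then follows. The instance is a yes-instance iff $\OPT(G,S)\le b$. Since $b$ is finite, this holds iff some admissible tuple has finite cost at most $b$, because the minimum is attained.

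The only point needing care is the infinite-cost conventions. A tuple with some $c_i(x_i)=+\infty$ corresponds to a pair $(T,\pi)$ with $\mu$-contribution $+\infty$. Such a tuple can never witness $\le b$. It does not affect the minimum unless every tuple is infinite, and in that case \cref{lem:assignment} says no sequence reaches any independent $T$, which matches $\OPT(G,S)=+\infty$. Beyond this, the argument is a direct bookkeeping consequence of \cref{lem:assignment}.
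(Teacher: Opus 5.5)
Your proof is correct and takes essentially the same route as the paper's. Both rest on \cref{lem:assignment} and on the correspondence between a tuple $(x_1,\ldots,x_k)$ and the pair consisting of the target $T=\{x_1,\ldots,x_k\}$ and the bijection $s_i\mapsto x_i$; the paper states this as two inequalities (sequence to tuple and tuple to sequence) before taking minima, whereas you use the lemma's equality together with an exact reindexing of the double minimum.
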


\begin{proof}
Labeling tokens by their starting vertices, every slide sequence ending at an
independent target yields a feasible tuple whose assignment cost is at most
the sequence length.  Conversely, a feasible tuple defines the independent
set $T=\{x_1,\ldots,x_k\}$, and the assignment $s_i\mapsto x_i$ gives
$\mu_G(S,T)\le\sum_i c_i(x_i)$.  \Cref{lem:assignment} then yields a slide
sequence of length at most this sum.  Taking minima proves both assertions.
\end{proof}

\begin{corollary}[Algorithmic transfer from \WIT{} to \ISD{}]
\label{cor:algorithmic-transfer}
Let $\mathcal C_p$ be a promise class of graphs indexed by a parameter $p$.
Suppose that, given any \WIT{} optimization instance $\mathcal I$ with $k$
candidate sets and underlying graph in $\mathcal C_p$, a minimum-cost
transversal or a report of nonexistence can be computed in time
\[
 f(k,p)|\mathcal I|^{O(1)}.
\]
Then, given a graph $G\in\mathcal C_p$ and a set
$S\in\binom{V(G)}{k}$, one can compute $\OPT(G,S)$ in time
\[
 f(k,p)(n+m)^{O(1)}.
\]
If the optimum is finite, the algorithm also returns an optimal independent
target and a shortest collision-free slide sequence; otherwise, it reports
nonexistence.
Here $n=|V(G)|$ and $m=|E(G)|$.
The statement remains valid when $p$ is an input-dependent quantity such as
$m=|E(G)|$.
\end{corollary}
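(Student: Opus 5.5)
We build the \WIT{} instance of \cref{cor:static-formulation}, solve it with the assumed algorithm, and convert its output into a slide sequence using the constructive proof of \cref{lem:assignment}.

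\textbf{Construction.} Write $S=\{s_1,\ldots,s_k\}$. For each $i$, run breadth-first search from $s_i$ in $O(n+m)$ time. This gives $\dist_G(s_i,\cdot)$ together with a shortest-path predecessor tree. Let $A_i$ be the set of vertices reachable from $s_i$, and set $c_i(v)=\dist_G(s_i,v)$ for $v\in A_i$. These costs are integers in $\{0,\ldots,n-1\}$, and each list has size at most $n$. The instance $\mathcal I=(G,(A_i,c_i)_{i\in[k]})$ therefore has encoding length $|\mathcal I|=O(k n\log n+n+m)$. Since $k\le n$, this is $(n+m)^{O(1)}$. The underlying graph is $G$ itself, so it lies in $\mathcal C_p$ with the same value of $p$; this also covers the case where $p$ is an input-dependent quantity such as $m$. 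Hence the assumed algorithm runs in time $f(k,p)(n+m)^{O(1)}$.

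\textbf{Correctness of the optimum.} A feasible transversal is exactly a tuple of distinct, pairwise nonadjacent vertices with all costs finite. Tuples with some $c_i(x_i)=+\infty$ never attain a finite minimum in \eqref{eq:global-opt-static}, so restricting each $A_i$ to reachable vertices does not change that minimum. By \cref{cor:static-formulation}, the optimum of $\mathcal I$ equals $\OPT(G,S)$. If the algorithm reports that no transversal exists, the minimum in \eqref{eq:global-opt-static} is $+\infty$, and we report that no independent $k$-set is reachable. Otherwise it returns $(x_1,\ldots,x_k)$, and we output $T=\{x_1,\ldots,x_k\}$, which is an independent $k$-set. The assignment $s_i\mapsto x_i$ shows $\mu_G(S,T)\le\OPT(G,S)$. \Cref{lem:assignment} shows that every slide sequence from $S$ to $T$ has length at least $\OPT(G,S)$. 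Hence $\mu_G(S,T)=\OPT(G,S)$, and the identity assignment $\pi(s_i)=x_i$ is an optimal bijection.

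\textbf{Reconstruction of the sequence.} The one step that needs care is making the inductive proof of \cref{lem:assignment} run in polynomial time. We maintain the current configuration $Q$, an optimal bijection $\pi:Q\to T$, and distance tables $\dist_G(\cdot,t)$ for every $t\in T$, computed by $k$ further breadth-first searches. Each iteration does the following.
\begin{itemize}[leftmargin=2em,itemsep=0.15em]
\item Pick $t\in T\setminus Q$ and set $s=\pi^{-1}(t)$.
\item Walk a shortest $s$--$t$ path, choosing at each step a neighbour whose distance to $t$ is one smaller, until the first unoccupied vertex $v$; let $u$ be its predecessor.
\item If $u\ne s$, swap $\pi(u)$ and $\pi(s)$.
\item Slide $u\to v$ and update $\pi$ so that $v\mapsto t$.
\end{itemize}
The lemma's argument shows that $\pi$ stays optimal and that $\mu_G(Q,T)$ drops by exactly one per iteration. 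There are therefore exactly $\OPT(G,S)\le k(n-1)$ iterations, each costing $O(n+m)$ time. Every slide enters an unoccupied vertex, so the sequence is collision-free. The total overhead is $(n+m)^{O(1)}$, which proves the claimed bound $f(k,p)(n+m)^{O(1)}$.
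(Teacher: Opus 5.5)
Your proposal is correct and follows the paper's proof essentially step for step. You use the same BFS-built \WIT{} instance on $G$ itself, the same appeal to \cref{cor:static-formulation} for the optimum, the same observation that the pairing $s_i\mapsto x_i$ is an optimal bijection onto $T$, and the same polynomial-time implementation of the first-unoccupied-vertex exchange construction from \cref{lem:assignment}. The only differences are cosmetic: you compute distances to the $k$ fixed targets instead of all-pairs distances with next-edge pointers. You also justify $\mu_G(S,T)\ge\OPT(G,S)$ by realizing $\mu_G(S,T)$ as an actual sequence to the independent set $T$, rather than by noting that a cheaper bijection would give a cheaper feasible tuple. That inequality comes from the constructive direction of \cref{lem:assignment} combined with the definition of $\OPT$, not from the lemma alone as your wording suggests.
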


\begin{proof}
Write $S=\{s_1,\ldots,s_k\}$.  For label $i$, take as its candidate set the
vertices in the connected component of $s_i$, and give candidate $v$ cost
$c_i(v)=\dist_G(s_i,v)$.  These distances and the candidate sets are obtained
by $k$ breadth-first searches.  By \cref{cor:static-formulation}, the optimum
of this \WIT{} instance is exactly $\OPT(G,S)$.

If the \WIT{} algorithm reports that no transversal exists, then
\cref{cor:static-formulation} gives $\OPT(G,S)=+\infty$, and we stop.
Otherwise, let $(x_1,\ldots,x_k)$ be an optimal tuple and
$T=\{x_1,\ldots,x_k\}$.  Its pairing $s_i\mapsto x_i$ is a minimum-cost
matching between $S$ and $T$: a cheaper bijection to the same target would be
a cheaper feasible tuple.  The constructive argument in
\cref{lem:assignment} can therefore be implemented to output exactly
$\OPT(G,S)$ slides.  For an explicit polynomial
implementation, precompute all-pairs unweighted distances and one next edge
on every finite shortest path.  In each iteration, scan the current assigned
shortest path until its first unoccupied vertex, perform the destination
exchange from the proof if needed, and make one slide.  Every assigned path
is simple, so the scan uses at most $n-1$ edges; moreover any finite optimum
is at most $k(n-1)$.  Thus distance computation and sequence reconstruction
add only a polynomial factor.
\end{proof}

\section{The cheap-prefix algorithm}
\label{sec:algorithm}

We first derive a biclique-free transversal certificate and then use it in the
cheap-prefix recursion for \WIT{}.

\subsection{Sparse conflicts and the threshold}
\label{sec:sparse-transversal}

Before defining the recursion, we prove the structural statement that makes
its bounded branching possible.  Suppose that each of $r$ labels has a
candidate set of $t$ vertices.  We want to select one vertex from every
candidate set so that the selected vertices are distinct and pairwise
nonadjacent.  The candidate sets may overlap, so both equality and adjacency
count as conflicts.

We next translate the K\H{o}v\'ari--S\'os--Tur\'an
bound~\cite{KovariSosTuran1954} into a bound on conflicts between two
possibly overlapping candidate sets.  Fix an integer $d\ge2$, a
$K_{d,d}$-free graph $G$, and an integer $t\ge1$.  Let
$X,Y\subseteq V(G)$ satisfy $|X|=|Y|=t$, and form disjoint labeled copies
\[
X^{\mathrm L}=\{x^{\mathrm L}:x\in X\},
\qquad
Y^{\mathrm R}=\{y^{\mathrm R}:y\in Y\}.
\]
Choosing one vertex from each candidate set produces one
\emph{cross-pair} in $X^{\mathrm L}\times Y^{\mathrm R}$.  Hence there are
exactly
\[
|X^{\mathrm L}\times Y^{\mathrm R}|=t^2
\]
possible cross-pairs.  Pairs with both endpoints in the same copied set are
irrelevant, since the two choices come from different candidate sets.

Let $J_{X,Y}$ be the bipartite graph with parts $X^{\mathrm L}$ and
$Y^{\mathrm R}$ in which
\[
x^{\mathrm L}y^{\mathrm R}\in E(J_{X,Y})
\quad\Longleftrightarrow\quad
xy\in E(G).
\]
Thus the edges of $J_{X,Y}$ are precisely the adjacency-conflict
cross-pairs.  The graph $J_{X,Y}$ is $K_{d,d}$-free.  Indeed, a
$K_{d,d}$ in $J_{X,Y}$ would project to two sets of $d$ distinct vertices
of $G$.  These two sets must be disjoint, since a vertex occurring on both
sides would require a loop in $G$.  Their projections would therefore form
a $K_{d,d}$ in $G$.

The balanced K\H{o}v\'ari--S\'os--Tur\'an bound consequently gives
\begin{equation}
	|E(J_{X,Y})|
	\le (d-1)^{1/d}t^{2-1/d}+(d-1)t.
	\label{eq:kst}
\end{equation}

There is one additional equality-conflict cross-pair
$x^{\mathrm L}x^{\mathrm R}$ for each $x\in X\cap Y$.  Its endpoints are
distinct copied vertices lying in different bipartition classes, so this
does not violate bipartiteness.  Since $|X\cap Y|\le t$, the total number of
adjacency-or-equality conflict cross-pairs is at most
\[
(d-1)^{1/d}t^{2-1/d}+(d-1)t+t.
\]
Accordingly, define
\[
p_d(t)=(d-1)^{1/d}t^{-1/d}+\frac{d}{t}.
\]
The proportion of the $t^2$ cross-pairs that are conflicts is then at most
\[
\frac{(d-1)^{1/d}t^{2-1/d}+(d-1)t+t}{t^2}
=p_d(t).
\]

We also use the following block-average crossing-degree theorem.  If
$C_1,\ldots,C_r$ is a partition of a graph, an edge is \emph{crossing at
$C_i$} when it has exactly one endpoint in $C_i$.

\begin{theorem}[Wanless--Wood criterion~\cite{WanlessWood2022}]
\label{thm:wanless-wood}
Let $H$ be a finite graph, let $r,t\in\mathbb N$, and let
$(C_1,\ldots,C_r)$ be a partition of $V(H)$ with $|C_i|\ge t$ for every
$i\in[r]$.  If
\[
 e_H(C_i,V(H)\setminus C_i)\le \frac t4|C_i|
 \qquad\text{for every }i\in[r],
\]
then $H$ has an independent transversal of $(C_1,\ldots,C_r)$.
\end{theorem}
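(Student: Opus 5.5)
The plan is a Rosenfeld-style counting argument. It proves directly that the number of partial independent transversals grows by a factor of at least $t/2$ each time a block is added. The first step is a reduction to blocks of size exactly $t$. In each $C_i$, keep the $t$ vertices of smallest crossing degree, i.e., number of neighbours outside $C_i$, and call the kept set $C_i'$. The average crossing degree over $C_i'$ is at most the average over $C_i$, which is at most $t/4$ by hypothesis. Deleting the other vertices can only remove edges. Hence in $H'=H[\bigcup_i C_i']$ we have $e_{H'}(C_i',V(H')\setminus C_i')\le t^2/4$ for every $i$. An independent transversal of $(C_i')$ in $H'$ is one of $(C_i)$ in $H$, so from now on I assume $|C_i|=t$ and $e_H(C_i,V(H)\setminus C_i)\le t^2/4$.

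For a set $P\subseteq[r]$ of block indices, let $N(P)$ be the number of independent sets of $H$ that contain exactly one vertex from each $C_j$ with $j\in P$ and no other vertices; put $N(\varnothing)=1$. The key claim, proved by induction on $|P|$, is
\[
 N(P\cup\{i\})\ \ge\ \tfrac t2\,N(P)\qquad\text{for every }P\subseteq[r]\text{ and }i\notin P .
\]
Given this, $N([r])\ge (t/2)^r>0$, which gives the desired independent transversal.

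For the inductive step, fix $P$ and $i\notin P$. Every independent partial transversal $I$ counted by $N(P)$, together with a vertex $v\in C_i$, gives one of the $t\,N(P)$ pairs $(I,v)$. Such a pair extends to a set counted by $N(P\cup\{i\})$ unless $v$ has a neighbour in $I$. Since the blocks are disjoint, distinctness of the chosen vertices is automatic. Call the non-extending pairs \emph{bad}. A bad pair $(I,v)$ has some edge $uv$ with $u\in I\cap C_j$ for some $j\in P$; charge the pair to that edge. For a fixed edge $uv$, the sets $I$ containing $u$ are determined by $I\setminus\{u\}$, which is counted by $N(P\setminus\{j\})$. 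By the induction hypothesis applied to $P\setminus\{j\}$ and $j$, this is at most $\frac{2}{t}N(P)$. Summing over the at most $e_H(C_i,V(H)\setminus C_i)\le t^2/4$ crossing edges at $C_i$ bounds the number of bad pairs by $\frac{t^2}{4}\cdot\frac2t N(P)=\frac t2N(P)$. Therefore
\[
 N(P\cup\{i\})\ \ge\ t\,N(P)-\tfrac t2\,N(P)=\tfrac t2\,N(P).
\]

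The main point needing care is that the induction must be stated for arbitrary subsets $P$ and arbitrary new blocks, not for a fixed ordering of the blocks. The charging step removes a block $j$ from the middle of $P$, so it needs the hypothesis for $(P\setminus\{j\},j)$; since $|P\setminus\{j\}|<|P|$, induction on $|P|$ covers this. The induction hypothesis also guarantees $N(P)>0$ for all smaller $P$, so dividing by $N(P)$ is legitimate. Otherwise, the only thing to check is that the size reduction in the first step preserves the per-block crossing-edge bound, which follows from the averaging argument.
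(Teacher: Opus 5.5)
Your proof is correct. The truncation to the $t$ vertices of smallest crossing degree keeps each block's crossing-edge count at most $t^2/4$. The induction over arbitrary pairs $(P,i)$, with the union bound over crossing edges at $C_i$ and the hypothesis applied to $(P\setminus\{j\},j)$, gives $N([r])\ge (t/2)^r>0$.

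The paper itself does not prove this statement; it imports it as a black box from Wanless and Wood. Your Rosenfeld-style counting is, to my knowledge, essentially the argument of that cited source. What it adds over the paper is a self-contained proof of the one external ingredient, plus the quantitative bound of at least $(t/2)^r$ independent transversals.

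The truncation step is not actually needed. Running the same induction on the original blocks bounds the bad pairs by $\frac{t}{4}|C_i|\cdot\frac{2}{t}N(P)=\frac{|C_i|}{2}N(P)$. This still yields $N(P\cup\{i\})\ge\frac{|C_i|}{2}N(P)\ge\frac{t}{2}N(P)$.
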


\begin{lemma}[Biclique-free transversal lemma]
\label{lem:sparse-transversal}
Let $d,r,t\in\mathbb N$ with $d,r\ge2$, let $G$ be $K_{d,d}$-free, and let
$B_1,\ldots,B_r\subseteq V(G)$ be possibly overlapping sets of the same size
$t$.  If
\begin{equation}
 4(r-1)p_d(t)<1,
 \label{eq:transversal-certificate}
\end{equation}
then there are representatives $y_i\in B_i$ that are distinct and pairwise
nonadjacent.
\end{lemma}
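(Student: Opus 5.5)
We will reduce the statement to \cref{thm:wanless-wood} by building an auxiliary conflict graph on disjoint labeled copies of the candidate sets.

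For each $i\in[r]$, let $C_i=\{x^{(i)}:x\in B_i\}$ be a disjoint copy of $B_i$, so $|C_i|=t$. Let $H$ be the graph on $C_1\cup\cdots\cup C_r$ in which, for $i\ne j$, the vertices $x^{(i)}$ and $y^{(j)}$ are adjacent if and only if $xy\in E(G)$ or $x=y$. We add no edges inside any $C_i$. An independent transversal of $(C_1,\ldots,C_r)$ in $H$ then picks $y_i^{(i)}\in C_i$ with no conflicts between different labels. Hence the projected representatives $y_i\in B_i$ are pairwise distinct and pairwise nonadjacent in $G$, which is exactly what the lemma asks for.

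It remains to verify the crossing-degree hypothesis of \cref{thm:wanless-wood}. Fix $i$. Every edge crossing at $C_i$ goes to some $C_j$ with $j\ne i$. The edges of $H$ between $C_i$ and $C_j$ are precisely the adjacency-or-equality conflict cross-pairs for the pair $(X,Y)=(B_i,B_j)$, with $C_i,C_j$ playing the roles of $X^{\mathrm L},Y^{\mathrm R}$. This uses the $K_{d,d}$-freeness of $J_{B_i,B_j}$ established above, together with the bound \eqref{eq:kst} and at most $t$ extra equality pairs. So
\[
 e_H(C_i,C_j)\le (d-1)^{1/d}t^{2-1/d}+dt=p_d(t)\,t^2 .
\]
Summing over the $r-1$ indices $j\ne i$ gives
\[
 e_H(C_i,V(H)\setminus C_i)\le (r-1)p_d(t)\,t^2 .
\]
By \eqref{eq:transversal-certificate}, this is less than $t^2/4=\frac t4|C_i|$. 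Applying \cref{thm:wanless-wood} with this $t$ and $|C_i|=t$ yields the independent transversal.

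The main point to check carefully is the correspondence between edges of $H$ and conflict cross-pairs. Equality conflicts must be counted once per pair of labels. An edge of $G$ may appear in both orientations $x^{(i)}y^{(j)}$ and $y^{(i)}x^{(j)}$, but that is accounted for correctly because $J_{X,Y}$ is defined on ordered copies and counts both. Once this is checked, everything else is the direct arithmetic above.
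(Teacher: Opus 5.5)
Your proposal is correct and follows the paper's proof. Both build disjoint labeled copies of the $B_i$ and the $r$-partite conflict graph on equal-or-adjacent pairs, bound $e_H(C_i,C_j)\le p_d(t)t^2$ via the $K_{d,d}$-freeness of $J_{B_i,B_j}$ plus at most $t$ equality pairs, sum over the $r-1$ other blocks, and apply the Wanless--Wood criterion; your remark that an edge of $G$ inside $B_i\cap B_j$ is counted in both orientations, consistently with the ordered copies in $J_{B_i,B_j}$, is a correct detail the paper leaves implicit.
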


\begin{proof}
For every $i$, make a disjoint copy
$C_i=\{(i,v):v\in B_i\}$.  Define an $r$-partite conflict graph $H$ on
$C_1\mathbin{\dot\cup}\cdots\mathbin{\dot\cup}C_r$ by joining $(i,u)$ and
$(j,v)$, for $i\ne j$, exactly when $u=v$ or $uv\in E(G)$.  Thus an
independent transversal of the copied blocks is exactly a tuple of distinct,
pairwise nonadjacent representatives of the original candidate sets.  Notice
that the copy $(i,v)$ records both the label $i$ and the original vertex $v$; the
copies are new vertices of $H$, not additional vertices of $G$.

Fix $i<j$.  By the projection argument accompanying~\eqref{eq:kst}, the
subgraph between $C_i$ and $C_j$ containing only conflicts
of the form $uv\in E(G)$ is $K_{d,d}$-free.  By~\eqref{eq:kst}, there
are at most
$(d-1)^{1/d}t^{2-1/d}+(d-1)t$ such edges.  Equality contributes at most
$|B_i\cap B_j|\le t$ further edges, so
\begin{equation}
 e_H(C_i,C_j)\le p_d(t)t^2.
 \label{eq:pair-conflict-density}
\end{equation}

For each $i$, the number of edges crossing at $C_i$ is
\[
 \sum_{j\ne i}e_H(C_i,C_j)
 \le (r-1)p_d(t)t^2
 <\frac{t^2}{4}=\frac t4|C_i|.
\]
The last strict inequality is exactly~\eqref{eq:transversal-certificate}.
The Wanless--Wood criterion therefore
gives an independent transversal of
$C_1,\ldots,C_r$, which projects to the required representatives.
\end{proof}

\begin{lemma}[Explicit prefix threshold]
\label{lem:explicit-threshold}
For integers $d,r\ge2$, define
\begin{equation}
 M(d,r)=(d-1)\bigl(4(r-1)\bigr)^d+4d^2(r-1).
 \label{eq:explicit-threshold}
\end{equation}
Then
\[
 4(r-1)p_d(M(d,r))<1.
\]
\end{lemma}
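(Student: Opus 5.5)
The plan is to bound the two terms of $p_d(M)$ separately, with $M=M(d,r)$, and show that each contributes strictly less than half of the allowed total. Write $R=4(r-1)\ge4$. The target inequality is
\[
 R\,(d-1)^{1/d}M^{-1/d}+R\,\frac{d}{M}<1,
\]
so it suffices to prove
\[
 R\,(d-1)^{1/d}M^{-1/d}\le\frac12\qquad\text{and}\qquad R\,\frac dM<\frac12,
\]
with strictness in at least one of the two.

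\emph{Second term.} Since $M>4d^2(r-1)=d^2R$, we get $Rd/M<1/d\le1/2$ for $d\ge2$. This is strict, and it is routine.

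\emph{First term.} The condition $R(d-1)^{1/d}M^{-1/d}\le 1/2$ is equivalent, after raising to the $d$-th power, to
\[
 M\ge 2^d(d-1)R^d.
\]
The first summand of $M$ is only $(d-1)R^d$, so the factor $2^d$ cannot be absorbed as stated. This balancing is the main obstacle: the two summands of $M$ are tuned with no slack factor $2^d$. I would therefore not split into halves, but use the exact sum.

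Set $x=(d-1)R^d/M\in(0,1)$. Then the first term equals $x^{1/d}$, and the second term equals $Rd/M$. Write $M=(d-1)R^d+d^2R$, so that
\[
 1-x=\frac{d^2R}{M}.
\]
Hence the second term is $(1-x)/d$. The goal becomes
\[
 x^{1/d}+\frac{1-x}{d}<1.
\]
By the weighted AM--GM inequality (Bernoulli's inequality for the concave function $y\mapsto y^{1/d}$),
\[
 x^{1/d}=x^{1/d}\cdot1^{1-1/d}\le \frac xd+1-\frac1d,
\]
with equality only when $x=1$. Adding $(1-x)/d$ gives exactly $1$, and strictness follows from $x<1$, which holds because $d^2R>0$. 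So the sum is strictly less than $1$, which is the required inequality $4(r-1)p_d(M(d,r))<1$.

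I would double-check the algebra $Rd/M=(1-x)/d$. Since $1-x=d^2R/M$, dividing by $d$ gives $dR/M$, as needed. This confirms that the constant $4d^2(r-1)$ is chosen precisely so that the identity works. The only subtle point is the strictness, which comes from $x<1$.
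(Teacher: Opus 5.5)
Your final argument is correct and is essentially the paper's proof. With $Q=R$, $t=M$ and $A=(d-1)^{1/d}Q$, your $x$ is $A^d/t$, your identity $1-x=d^2R/M$ is the paper's relation $t-A^d=d^2Q$, and your strict weighted AM--GM bound $x^{1/d}<x/d+1-1/d$ becomes $1-u^d<d(1-u)$ after substituting $u=x^{1/d}$; the paper proves that same inequality by factoring $t-A^d$ as a difference of $d$-th powers and bounding the geometric sum by $d\,t^{(d-1)/d}$. You correctly diagnosed the split-into-halves attempt as failing, so it can simply be deleted from the write-up.
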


\begin{proof}
Put $Q=4(r-1)$ and set
\[
 t=M(d,r)=(d-1)Q^d+d^2Q,
 \qquad A=(d-1)^{1/d}Q,
 \qquad x=t^{1/d}.
\]
Since $t=x^d$ and $A=(d-1)^{1/d}Q$, we have
\[
 4(r-1)p_d(t)=Qp_d(t)=\frac{A}{x}+\frac{dQ}{x^d}.
\]
Thus it remains to prove that the last expression is less than one.  We have
$x>A$ and
$x^d-A^d=d^2Q$.  Factoring the difference of powers yields
\[
 d^2Q=(x-A)\sum_{j=0}^{d-1}x^{d-1-j}A^j
 <d x^{d-1}(x-A).
\]
Consequently $x^{d-1}(x-A)>dQ$, and division by $x^d=t$ gives
\[
 \frac Ax+\frac{dQ}{x^d}<1.
\]
The left-hand side is $Qp_d(t)$, proving the certificate.
\end{proof}

For integers $d,r\ge2$ and a $K_{d,d}$-free graph, the two lemmas imply that
arbitrary $M(d,r)$-element subsets of the $r$ candidate sets admit compatible
representatives.  For fixed $d$, $M(d,r)=\Theta_d(r^d)$.

\subsection{The cheap-prefix certificates}

A naive algorithm could choose a destination for each of the $k$ labels and
test the resulting $k$-tuple.  This takes roughly $n^k$ time.  The goal is to
replace the factor $n$ by a bound depending only on $d$ and the number $r$ of
labels that remain.

Think of a recursive state as a partial solution.  Its \emph{residual budget}
is the original budget minus the costs already paid, and the \emph{residual
candidate set} of an unassigned label consists of the vertices that do not
conflict with an already chosen destination and whose individual cost does
not exceed the residual budget.  The formal state below records exactly these
two objects.

Fix an arbitrary total order $\prec$ on $V(G)$, used only to break ties
between vertices of equal cost.  Order each residual candidate set by
nondecreasing coordinate cost, breaking ties according to $\prec$.  The first
\[
\min\{M(d,r),|A_i|\}
\]
vertices of the residual candidate set $A_i$ form the label's
\emph{cheap prefix}.
A cheap prefix is \emph{full} if it contains exactly $M(d,r)$ vertices.
The tie-breaking rule makes each prefix uniquely defined but has no effect on
the correctness of the algorithm.  Exactly one of the following situations
occurs.
\begin{enumerate}[label=\textup{(\roman*)},leftmargin=2.3em,itemsep=0.25em]
\item \emph{A residual candidate set is small.}  It has fewer than $M(d,r)$
vertices, so we branch over all vertices in that candidate set and thereby
assign the label.
\item \emph{All prefixes are full and collectively affordable.}  If the sum
of their largest costs is at most the residual budget, then every selection
from the prefixes is affordable.  By
\cref{lem:sparse-transversal,lem:explicit-threshold}, at least one such
selection is distinct and pairwise nonadjacent.  The decision procedure may
therefore return yes without searching below this state.
\item \emph{All prefixes are full but collectively too expensive.}  A
feasible solution cannot avoid every prefix: outside a prefix, the cost for
that coordinate is at least the prefix maximum, and these lower bounds already
exceed the budget.  Hence, in every feasible solution, at least one label is
assigned a vertex from its cheap prefix.  We branch on every possible such
label--vertex pair.
\end{enumerate}

Thus Case~(ii) certifies existence, whereas Case~(iii) certifies an exhaustive
branching rule; neither case makes a coordinate-wise greedy choice.

\begin{theorem}[Weighted transversal theorem]
\label{thm:weighted}
Let $d,k\ge2$ be integers, and let $\mathcal I$ be a \WIT{} instance with
$k$ candidate sets whose underlying graph is promised to be $K_{d,d}$-free.
The decision version can be solved deterministically, with a feasible tuple
returned for every yes-instance, in time
\begin{equation}
 O\!\left((kM(d,k))^k|\mathcal I|^{O(1)}\right)
 =2^{O(dk\log k)}|\mathcal I|^{O(1)}.
 \label{eq:weighted-runtime}
\end{equation}
The optimization version can be solved within the same parameterized bound:
the algorithm returns a globally minimum-cost transversal or reports that no
transversal exists.
\end{theorem}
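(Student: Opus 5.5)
We prove the statement by a recursive procedure $\mathrm{Solve}(P,\beta)$. Its state is a partial assignment $P$ of vertices to a set $L\subseteq[k]$ of labels, where the chosen vertices are distinct and pairwise nonadjacent, together with the residual budget $\beta=B-\sum_{i\in L}c_i(x_i)$. Let $r=k-|L|$ be the number of unassigned labels. For each unassigned label $i$, the residual candidate set $A_i'$ consists of the vertices $v\in A_i$ that satisfy three conditions: $v$ is not a chosen vertex, $v$ is not adjacent to a chosen vertex, and $c_i(v)\le\beta$. We handle the degenerate cases first.
\begin{itemize}[leftmargin=2em,itemsep=0.15em]
\item If $r=0$, return $P$.
\item If some $A_i'=\varnothing$, return no.
\item If $r=1$, pick the cheapest vertex of the single remaining set.
\end{itemize}
For $r\ge2$ we apply the trichotomy (i)--(iii) with prefix length $M(d,r)$.
\begin{itemize}[leftmargin=2em,itemsep=0.15em]
\item In Case~(i), branch over all fewer than $M(d,r)$ vertices of the short list.
\item In Case~(ii), every selection from the full prefixes costs at most $\beta$. \Cref{lem:sparse-transversal,lem:explicit-threshold} give compatible representatives. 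These are distinct, nonadjacent to each other, and nonadjacent to $P$, since they lie in the residual sets.
\item In Case~(iii), branch over all at most $rM(d,r)$ label--vertex pairs drawn from the prefixes.
\end{itemize}
Each branch extends $P$ and recurses with the reduced budget.

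Correctness splits into soundness and completeness.

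\emph{Soundness.} Every returned tuple is feasible and costs at most $B$, by construction.

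\emph{Completeness.} We show by induction on $r$ that if some feasible completion of $P$ within budget $\beta$ exists, then some branch finds one. Fix such a completion $(x_i)_{i\notin L}$. Every $x_i$ lies in $A_i'$, since costs are nonnegative and each single cost is at most the total.
\begin{itemize}[leftmargin=2em,itemsep=0.15em]
\item In Case~(i), the branch that selects $x_i$ for the short label is explored.
\item In Case~(iii), suppose no $x_i$ lay in its prefix. Then $c_i(x_i)\ge\max$ of prefix $i$ for every $i$, by the sorted order. Summing gives total cost $>\beta$, a contradiction. So some pair $(i,x_i)$ is branched on.
\end{itemize}
The extended partial assignment still admits the rest of the completion, so the induction hypothesis applies. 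A yes-witness from Case~(ii) needs no further proof.

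\emph{Running time.} The depth is at most $k$. The branching factor at level $r$ is at most $rM(d,r)\le kM(d,k)$. Each node costs polynomial time: recompute the residual sets, sort them, and compare rational sums. This gives $(kM(d,k))^k|\mathcal I|^{O(1)}$ in total. With $M(d,k)\le d(4k)^d+4d^2k$ we get $\log(kM(d,k))=O(d\log k)$, hence $2^{O(dk\log k)}$.

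\emph{Optimization.} This is the step I expect to be the main obstacle, because Case~(ii) certifies existence without telling us the cheapest transversal. Bisecting on $B$ does not directly yield an exact optimum for rational costs. Instead I would argue that the optimum value is one of finitely many sums and find it exactly by a modified recursion.

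The modification is to replace Case~(ii). Let $\beta$ be the residual budget. When all prefixes are full, branch on all prefix pairs as in Case~(iii), and additionally consider the "all outside prefixes" option. That option costs at least the sum of the prefix maxima $\Sigma$.

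Now take the optimal completion. Either it uses some prefix vertex, in which case it is found by the pair branching, or it avoids all prefixes. In the latter case its cost is $\ge\Sigma$. But Case~(ii)'s certificate applied at budget $\Sigma$ gives a feasible transversal of cost $\le\Sigma$ taken from the prefixes. So some optimal solution uses a prefix vertex, and pair branching is exhaustive for the optimum.

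Therefore the optimization recursion always branches on prefix pairs, or on the short list in Case~(i). It returns the minimum over branches, with no budget needed. This uses the same bound $(kM(d,k))^k$. The witness in the ``avoids prefixes'' argument needs the prefixes to consist of residual candidates with respect to an unbounded budget, which is how we define them in this version. The branches then return globally minimal transversals, or report nonexistence when some residual list becomes empty.
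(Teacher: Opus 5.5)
Your proof is correct. The decision recursion is essentially the paper's, but your treatment of optimization takes a genuinely different route.

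The paper keeps Step~4 as a nonconstructive terminal yes-certificate. It recovers witnesses by self-reduction, through repeated calls to the decision procedure. It obtains the optimum by clearing denominators with a common denominator $D$ and binary-searching the integer budget over $[0,U]$.

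You instead drop the budget and never stop at Case~(ii). When all prefixes are full, \cref{lem:sparse-transversal,lem:explicit-threshold} yield a prefix-only completion of cost at most $\Sigma=\sum_i a_i$. That completion is no more expensive than any prefix-avoiding completion. So some optimal completion meets a prefix, and branching on the at most $rM(d,r)$ prefix pairs is exhaustive for the optimum.

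This gives a fully constructive search tree of the same size $O((kM(d,k))^k)$ that returns an optimal tuple directly. It compares rational sums without scaling and needs neither binary search nor self-reduction. The paper's route, in exchange, reuses the budgeted decision procedure as a black box for both witnesses and optimization. Incidentally, bisection does give the exact optimum for rational costs once denominators are cleared, which is exactly what the paper does; your route just makes it unnecessary.

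One point to tighten. In your decision procedure, ``sound by construction'' does not cover Case~(ii). The lemmas only assert that compatible representatives exist, you give no way to compute them, and the theorem demands a returned tuple. Any of the following repairs this:
\begin{enumerate}
\item Derive the decision version from your optimization recursion and compare the returned optimum with $B$. This is the simplest fix.
\item Brute-force the at most $M(d,r)^r\le M(d,k)^k$ tuples of $\prod_i P_i$ at the single Case~(ii) node where the search terminates, which stays within the stated bound.
\item Use the paper's self-reduction.
\end{enumerate}
The repair is easy, but it needs to be stated.
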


The remainder of this section defines the recursion and proves this theorem.

\subsection{The recursive state}

For the decision version, fix a \WIT{} instance $\mathcal I$ with original
candidate sets $A_1^0,\ldots,A_k^0$, costs $c_i$, and original budget $B_0$.
A recursive state stores
\begin{itemize}[leftmargin=2em,itemsep=0.15em]
\item a set $D_F\subseteq[k]$ of already assigned labels;
\item a partial assignment $F:D_F\to V(G)$;
\item the unassigned labels $L=[k]\setminus D_F$;
\item the residual budget $B$; and
\item a residual candidate set $A_i$ for every $i\in L$.
\end{itemize}
Write $X_F=F(D_F)=\{F(i):i\in D_F\}$ for the set of vertices already chosen.
The state maintains
\begin{align}
 F(i)&\in A_i^0\ (i\in D_F),\quad
 F\text{ is injective, and }X_F\text{ is independent},
 \tag{I1}\label{inv:i1}\\
 B&=B_0-\sum_{i\in D_F}c_i(F(i)),
 \tag{I2}\label{inv:i2}\\
 A_j&=\{u\in A_j^0\setminus N[X_F]:c_j(u)\le B\}
 \quad(j\in L).
 \tag{I3}\label{inv:i3}
\end{align}
Invariant~\eqref{inv:i3} has two purposes.  Deleting $N[X_F]$ prevents a
future equality or adjacency conflict with a chosen vertex.  Deleting a
vertex of individual cost greater than $B$ is safe because all costs are
nonnegative.  Membership in $A_j$ certifies only that a single choice is
affordable, not that arbitrary choices from all residual candidate sets fit
the budget together.  At the root,
$F=\varnothing$, $L=[k]$, $B=B_0$, and
$A_i=\{v\in A_i^0:c_i(v)\le B_0\}$.

\subsection{The cheap-prefix dichotomy}

The next lemma states the branching argument formally.  Its affordable-prefix
part uses the transversal certificate already established in
\cref{lem:sparse-transversal,lem:explicit-threshold}; its over-budget part is
a sorted-cost lower bound.

At the current state, a \emph{feasible completion} is a tuple
$(x_i)_{i\in L}$ with $x_i\in A_i$ for every $i\in L$, with distinct and
pairwise nonadjacent entries, and with total residual cost at most $B$.

\begin{lemma}[Cheap-prefix dichotomy]
\label{lem:cheap-prefix}
Let $d\ge2$ be an integer.  Consider a state with $r=|L|\ge2$ in a
$K_{d,d}$-free graph,
and suppose $|A_i|\ge M(d,r)$ for every $i\in L$.  Let $P_i$ be the first
$M(d,r)$ vertices of $A_i$ in the fixed order extending nondecreasing cost,
and put
\[
 a_i=\max_{v\in P_i}c_i(v).
\]
Then the following statements hold.
\begin{enumerate}[label=\textup{(\alph*)},leftmargin=2.2em,itemsep=0.2em]
\item If $\sum_{i\in L}a_i\le B$, there are distinct, pairwise nonadjacent
vertices $y_i\in P_i$ for $i\in L$ such that
$\sum_{i\in L}c_i(y_i)\le B$.
\item If $\sum_{i\in L}a_i>B$, every feasible completion
$(x_i)_{i\in L}$ has $x_i\in P_i$ for at least one $i\in L$.
\end{enumerate}
\end{lemma}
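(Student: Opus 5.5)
The plan is to prove the two parts separately; both are short consequences of results already established. Part~(a) is an application of \cref{lem:sparse-transversal,lem:explicit-threshold} to the prefixes. Part~(b) is a contrapositive sorted-cost argument.

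For~(a), I apply \cref{lem:sparse-transversal} to the sets $B_i=P_i$ for $i\in L$. These are $r\ge2$ possibly overlapping subsets of $V(G)$, all of the same size $t=M(d,r)$; fullness is guaranteed by the hypothesis $|A_i|\ge M(d,r)$. By \cref{lem:explicit-threshold}, the certificate $4(r-1)p_d(M(d,r))<1$ holds, so the lemma yields distinct, pairwise nonadjacent representatives $y_i\in P_i$. For the budget, each $y_i\in P_i$ satisfies $c_i(y_i)\le a_i$ by the definition of $a_i$, so
\[
\sum_{i\in L}c_i(y_i)\le\sum_{i\in L}a_i\le B.
\]
It is worth remarking that $y_i\in P_i\subseteq A_i$. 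Consequently, by \eqref{inv:i3}, the tuple avoids $N[X_F]$ and is in fact a feasible completion, although the statement of~(a) asks only for the displayed properties.

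For~(b), I argue by contradiction. Suppose a feasible completion $(x_i)_{i\in L}$ has $x_i\notin P_i$ for every $i\in L$. Since $x_i\in A_i$ and $P_i$ is an initial segment of $A_i$ in an order that extends nondecreasing $c_i$, every element of $A_i\setminus P_i$ comes after every element of $P_i$. Hence $c_i(x_i)\ge c_i(v)$ for all $v\in P_i$, and in particular $c_i(x_i)\ge a_i$. Summing over $i\in L$ gives
\[
\sum_{i\in L}c_i(x_i)\ge\sum_{i\in L}a_i>B,
\]
which contradicts the budget condition in the definition of a feasible completion.

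The only point needing care is the monotonicity in~(b). One must check that the tie-breaking order $\prec$ does not break the inequality $c_i(x_i)\ge a_i$. It does not, because the order is lexicographic with cost as the primary key: any vertex placed later has cost at least that of every earlier vertex. No other step looks like an obstacle, since the extremal content has already been packaged into \cref{lem:sparse-transversal,lem:explicit-threshold}.
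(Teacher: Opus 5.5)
Your proof is correct and matches the paper's argument essentially step for step. In part (a) you apply \cref{lem:sparse-transversal} to the full prefixes, take the certificate from \cref{lem:explicit-threshold}, and bound each cost by $a_i$; part (b) is the same sorted-cost contradiction, including your check that tie-breaking by $\prec$ cannot violate $c_i(x_i)\ge a_i$.
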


\begin{proof}
For (a), \cref{lem:explicit-threshold} gives
$4(r-1)p_d(M(d,r))<1$.  Applying \cref{lem:sparse-transversal} to the sets
$(P_i)_{i\in L}$ supplies distinct, pairwise nonadjacent representatives
$y_i\in P_i$.  Each satisfies $c_i(y_i)\le a_i$, so their total cost is at
most $B$.

For (b), suppose a feasible completion avoids every prefix.  The ordering is
nondecreasing in cost, so $c_i(x_i)\ge a_i$ for every $i\in L$, including
when costs tie.  Therefore
$\sum_{i\in L}c_i(x_i)\ge\sum_{i\in L}a_i>B$, a contradiction.
\end{proof}

When a branch fixes $F(i)=v$, it spends $c_i(v)$, removes label $i$, and
updates every remaining candidate set to
\begin{equation}
 A_j'=\{u\in A_j\setminus N[v]:c_j(u)\le B'\},
 \qquad B'=B-c_i(v).
 \label{eq:candidate-set-update}
\end{equation}
Closed-neighborhood deletion simultaneously enforces distinctness and
nonadjacency.  Since the branch chooses $v\in A_i\subseteq A_i^0$, the new
assignment is also legal for label $i$.  Thus the update
preserves~\eqref{inv:i1}--\eqref{inv:i3}.

\subsection{The decision procedure}

The following box is the complete recursive solver.  ``Branch on $(i,v)$''
means create the child in~\eqref{eq:candidate-set-update}; return yes if some
child returns yes, and return no if every child returns no.

\begin{center}
\fbox{\begin{minipage}{0.93\linewidth}
\textbf{Cheap-prefix decision procedure at a state $(F,L,B,(A_i)_{i\in L})$.}
\begin{enumerate}[label=\textup{\arabic*.},leftmargin=2em,itemsep=0.25em,
                  topsep=0.35em]
\item Return no if $B<0$ or some $A_i$ is empty.  Return yes if
$L=\varnothing$ or $L=\{i\}$.
\item Put $r=|L|$.  For each $i\in L$, let $P_i$ be the first
$\min\{M(d,r),|A_i|\}$ vertices of $A_i$ in nondecreasing cost order.
\item If some $|A_i|<M(d,r)$, choose one such $i$ and branch on every
$(i,v)$ with $v\in A_i=P_i$.
\item Otherwise set $a_i=\max_{v\in P_i}c_i(v)$ for every $i\in L$.
If $\sum_{i\in L}a_i\le B$, return yes.
\item If $\sum_{i\in L}a_i>B$, branch on every $(i,v)$ with $i\in L$ and
$v\in P_i$.
\end{enumerate}
\end{minipage}}
\end{center}

Step~1 is valid for a single remaining label because its nonempty residual
candidate set already contains only vertices compatible with $X_F$ and of
cost at most $B$.  Step~3 enumerates an entire small candidate set.
Steps~4 and~5 are the two conclusions of \cref{lem:cheap-prefix}.  Thus every
terminal answer and every branch is justified by one of the explicit
certificates above.

Although Step~4 is nonconstructive, a standard self-reduction recovers a
witness: fix one label at a time by retaining the first tentative assignment
whose child remains feasible.  Thus no constructive implementation of the
Wanless--Wood theorem is needed.

\begin{lemma}[Correctness of the recursion]
\label{lem:recursion-correct}
Let $d\ge2$, and consider the cheap-prefix decision procedure above on a
$K_{d,d}$-free graph.  Every recursive state
satisfying~\eqref{inv:i1}--\eqref{inv:i3} returns yes if and only if there
exists a tuple
$(x_i)_{i\in L}\in\prod_{i\in L}A_i$ whose entries are distinct and pairwise
nonadjacent and whose total cost is at most $B$.
\end{lemma}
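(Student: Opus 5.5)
We argue by induction on $r=|L|$, and within a fixed $r$ by the structure of the recursion; every child has one fewer unassigned label, so the recursion terminates. Throughout, we use the invariants~\eqref{inv:i1}--\eqref{inv:i3} and the fact, verified after~\eqref{eq:candidate-set-update}, that every child again satisfies them. Call a tuple as in the statement a \emph{feasible completion}. For a child created by fixing $F(i)=v$ with $v\in A_i$, the first step is to establish a \emph{child correspondence}. Feasible completions of the child, extended by $x_i=v$, are exactly the feasible completions of the parent whose $i$-th entry is $v$.

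One direction of the correspondence is direct. A child completion $(x_j)_{j\ne i}$ has $x_j\in A_j'\subseteq A_j\setminus N[v]$, so each $x_j$ is distinct from and nonadjacent to $v$. Its cost is at most $B'=B-c_i(v)$, so appending $x_i=v$ gives a parent completion of cost at most $B$. Conversely, take a parent completion with $x_i=v$. For $j\ne i$, distinctness and nonadjacency give $x_j\notin N[v]$. Nonnegativity of costs gives $c_j(x_j)\le B-c_i(v)=B'$. Hence $x_j\in A_j'$, and the remaining entries form a child completion. By induction each child answers correctly, so a branching step returns yes exactly when some branched pair $(i,v)$ extends to a feasible completion.

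Next we check each step of the procedure. In Step~1, if $B<0$ or some $A_i$ is empty, no completion exists: costs are nonnegative, and an empty product admits no tuple. If $L=\varnothing$, the empty tuple is a completion because $B\ge0$. If $L=\{i\}$, any $x\in A_i$ has $c_i(x)\le B$ by~\eqref{inv:i3}, and distinctness and nonadjacency are vacuous. In Step~3, every completion has $x_i\in A_i$, so branching over all of $A_i$ is exhaustive, and the child correspondence gives equivalence. In Step~4, \cref{lem:cheap-prefix}(a) applies since $r\ge2$, all $|A_i|\ge M(d,r)$, and $G$ is $K_{d,d}$-free. It yields $y_i\in P_i\subseteq A_i$ forming a completion, so the answer yes is correct. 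In Step~5, \cref{lem:cheap-prefix}(b) shows that every completion has some $x_i\in P_i$, so it is captured by the branch on $(i,x_i)$. Conversely, a yes from any child yields a completion by the child correspondence.

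The main obstacle is the converse in the child correspondence. It must be shown that pruning by the updated budget $B'$ in~\eqref{eq:candidate-set-update} never discards an entry of a genuine completion. The argument above settles this using nonnegativity of costs: no single remaining coordinate can exceed the total remaining budget. The remaining steps are routine bookkeeping, provided we confirm that Step~4 is reached only with $r\ge2$ and with every $P_i$ full, so that the hypotheses of \cref{lem:cheap-prefix} hold.
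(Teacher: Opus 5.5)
Your proof is correct and is essentially the paper's argument: induction on $r=|L|$, with the Step~1 base cases, Step~4 justified by \cref{lem:cheap-prefix}(a), and Step~5 by \cref{lem:cheap-prefix}(b). Your two-directional child correspondence is the paper's soundness step (append $v$ to a child completion) combined with its completeness step (the other entries of a fixed completion avoid $N[v]$ and fit the budget $B-c_i(v)$ by nonnegativity), stated once up front.
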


\begin{proof}
We induct on $r=|L|$.  If $B<0$, no tuple of nonnegative total cost is
feasible; if some residual candidate set is empty, no completion exists.  For
$r=0$, the empty residual tuple is feasible.  For $r=1$, every vertex in the
nonempty residual candidate set is compatible with $X_F$ and affordable
by~\eqref{inv:i3}.  These are exactly the base cases in Step~1.

For soundness, a Step~4 yes-answer is valid by
\cref{lem:cheap-prefix}(a).  In a branch, the induction hypothesis supplies a
child completion avoiding $N[v]$; adding $v$ yields a valid parent completion
of cost at most $B'+c_i(v)=B$.

For completeness, fix a feasible completion $(x_i)_{i\in L}$.  At a
state with a small candidate set, Step~3 explicitly contains the branch
$v=x_i$ for the chosen small label.  If all prefixes are full and affordable,
Step~4 returns yes.  If they are full and over budget,
\cref{lem:cheap-prefix}(b) gives a
label $i$ with $x_i\in P_i$, and Step~5 includes the branch $(i,x_i)$.
Independence places every other $x_j$ outside $N[x_i]$, and nonnegative costs
keep the remaining tuple within budget $B-c_i(x_i)$.  Hence the rest of the
fixed completion survives in the child, and induction applies.
\end{proof}

\begin{figure}[t]
	\centering
	\begin{tikzpicture}[
		x=1.15cm,
		y=1.1cm,
		>=Latex,
		edge/.style={->,thick},
		small/.style={
			draw,
			rounded corners,
			fill=blue!12,
			minimum width=1.7cm,
			minimum height=0.85cm
		},
		affordable/.style={
			draw,
			diamond,
			aspect=1.7,
			fill=green!16,
			inner sep=1.5pt,
			minimum width=1.5cm
		},
		expensive/.style={
			draw,
			regular polygon,
			regular polygon sides=6,
			fill=orange!18,
			minimum size=1.25cm,
			inner sep=1pt
		},
		noleaf/.style={
			draw,
			circle,
			fill=red!13,
			minimum size=0.8cm
		},
		yesleaf/.style={
			draw,
			circle,
			fill=green!16,
			minimum size=0.8cm
		},
		leveltext/.style={
			font=\small,
			fill=white,
			inner sep=1pt
		}
		]
		
		\node[expensive] (root) at (0,0) {\(\mathsf E_k\)};
		
		\node[small]      (s1) at (-4,-2.1) {\(\mathsf S_{k-1}\)};
		\node[affordable] (a1) at (0,-2.1)  {\(\mathsf A_{k-1}\)};
		\node[expensive]  (e1) at (4,-2.1)  {\(\mathsf E_{k-1}\)};
		
		\node at (-2,-2.1) {\(\cdots\)};
		\node at ( 2,-2.1) {\(\cdots\)};
		
		\draw[edge] (root) -- (s1);
		\draw[edge] (root) -- (a1);
		\draw[edge] (root) -- (e1);
		
		\node[leveltext] at (0,-0.95) {
			at most \(kM(d,k)\) children
		};
		
		\node[affordable] (a2) at (-5.5,-4.4) {\(\mathsf A_{k-2}\)};
		\node[expensive]  (e2) at (-2.7,-4.4) {\(\mathsf E_{k-2}\)};
		\node at (-4.1,-4.4) {\(\cdots\)};
		
		\draw[edge] (s1) -- (a2);
		\draw[edge] (s1) -- (e2);
		
		\node[leveltext] at (-4,-3.25) {
			fewer than \(M(d,k-1)\) children
		};
		
		\node[small]  (s2) at (2.7,-4.4) {\(\mathsf S_{k-2}\)};
		\node[noleaf] (n2) at (5.5,-4.4) {\(\mathsf N\)};
		\node at (4.1,-4.4) {\(\cdots\)};
		
		\draw[edge] (e1) -- (s2);
		\draw[edge] (e1) -- (n2);
		
		\node[leveltext] at (4,-3.25) {
			at most \((k-1)M(d,k-1)\) children
		};
		
		\node at (-2.7,-5.65) {\(\vdots\)};
		\node at ( 2.7,-5.65) {\(\vdots\)};
		
		\draw[edge,dashed] (e2) -- (-2.7,-5.3);
		\draw[edge,dashed] (s2) -- ( 2.7,-5.3);
		
		\node[yesleaf] (y3) at (-3.8,-7.0) {\(\mathsf Y\)};
		\node[noleaf]  (n3) at (-1.6,-7.0) {\(\mathsf N\)};
		\node[yesleaf]    (y4) at (1.6,-7.0) {\(\mathsf Y\)};
		\node[noleaf]     (n4) at (3.8,-7.0) {\(\mathsf N\)};
		
		\draw[edge,dashed] (-2.7,-5.95) -- (y3);
		\draw[edge,dashed] (-2.7,-5.95) -- (n3);
		\draw[edge,dashed] ( 2.7,-5.95) -- (y4);
		\draw[edge,dashed] ( 2.7,-5.95) -- (n4);
	\end{tikzpicture}
	
	\caption{A schematic bounded search tree for the \WIT{} algorithm.  A node
		\(\mathsf S_r\) represents case~\textup{(i)}: some residual candidate set
		\(A_i\) has size smaller than \(M(d,r)\), and the algorithm branches on all
		vertices of \(A_i\), producing fewer than \(M(d,r)\) children.  A node
		\(\mathsf A_r\) represents case~\textup{(ii)}: all cheap prefixes are full
		and collectively affordable.  By
		\cref{lem:sparse-transversal,lem:explicit-threshold}, this is a terminal
		yes-node.  A node \(\mathsf E_r\) represents case~\textup{(iii)}: all
		prefixes are full but collectively too expensive.  By
		\cref{lem:cheap-prefix}(b), every feasible completion meets some prefix, so
		the algorithm branches on at most \(rM(d,r)\) label--vertex pairs.  The
		symbols \(\mathsf Y\) and \(\mathsf N\) denote terminal yes- and no-states
		arising from the base cases.  Every recursive edge fixes one label, so \(r\)
		decreases by one.  Consequently, the tree has depth at most \(k\), maximum
		branching factor \(kM(d,k)\), and
		\(O((kM(d,k))^k)\) nodes.  The root is drawn as an
		\(\mathsf E_k\)-node only to illustrate the largest branching case; in an
		actual instance it may instead be an \(\mathsf S_k\)- or
		\(\mathsf A_k\)-node, or terminate immediately in a base case.}
	\label{fig:wit-bounded-search-tree}
\end{figure}
\subsection{Running time, witnesses, and optimization}

\begin{proof}[Proof of \Cref{thm:weighted}]
The root satisfies the invariants, so \cref{lem:recursion-correct} proves
budget feasibility.  Every branch assigns one label, and therefore the depth
is at most $k$.  The function $M(d,r)$ is nondecreasing in $r$.  A branching
state consequently has at most $rM(d,r)\le kM(d,k)$ children.  The search
tree has $O((kM(d,k))^k)$ states.
The three recursive state types and these depth and branching bounds are
summarized in \cref{fig:wit-bounded-search-tree}.

At each state, filtering candidate sets, selecting prefixes, testing equality
and adjacency, and performing exact rational arithmetic take polynomial time in
$|\mathcal I|$.  This proves the first bound
in~\eqref{eq:weighted-runtime}.  By~\eqref{eq:explicit-threshold},
\[
 M(d,k)\le(d-1)(4k)^d+4d^2k,
\]
and therefore
\[
 \log\bigl(kM(d,k)\bigr)=O(d\log k).
\]
Raising the branching factor to depth $k$ gives the second bound
in~\eqref{eq:weighted-runtime}.  A depth-first implementation stores only one
root-to-leaf assignment and its current candidate sets and prefixes.

To recover a witness, start at any state certified feasible by the decision
procedure.  Choose an unassigned label $i$ and test its residual vertices one
at a time.  For each $v\in A_i$, form the child
in~\eqref{eq:candidate-set-update} and call the decision procedure.  Some child
must be feasible; fix the first such $v$ and continue.  After at most $k$
rounds, all labels are assigned.  Across the rounds, at most
$\sum_i|A_i^0|$ vertices are tested, so the number of decision calls is
polynomial in $|\mathcal I|$ and is absorbed
by~\eqref{eq:weighted-runtime}.

For optimization, clear all rational denominators using a common denominator
$D$, whose binary length is polynomial in the input size.  If some original
candidate set is empty, no transversal exists.  Otherwise, let
\[
U=\sum_{i\in[k]}\max_{v\in A_i^0}Dc_i(v).
\]
Every transversal has scaled cost at most $U$.  We first run the decision
procedure with budget $U$; if it returns no, no transversal exists.  Otherwise,
binary search on the integer interval $[0,U]$ finds the least feasible scaled
budget, and self-reduction recovers a transversal of that cost.  Dividing by
$D$ gives the optimum in the original scale.  The bit lengths of $D$ and $U$
and the number of decision calls are polynomial in $|\mathcal I|$.
\end{proof}

\subsection{Completing the proof of the main theorem}

\begin{proof}[Proof of \Cref{thm:main}]
Apply \cref{cor:algorithmic-transfer} to the optimization algorithm of
\cref{thm:weighted}, with $p=d$ and $\mathcal C_d$ the class of
$K_{d,d}$-free graphs.  The resulting \WIT{} instance has $k$ candidate
sets and polynomial encoding length.  The transfer and
\cref{thm:weighted} therefore compute $\OPT(G,S)$ in time
\[
  2^{O(dk\log k)}(n+m)^{O(1)}.
\]
If the optimum is finite, the transfer returns an optimal independent target
and reconstructs a collision-free sequence of exactly that length; if no
transversal exists, it reports that no independent $k$-set is reachable from
$S$.
For the budgeted problem, compare the computed optimum with $b$, incurring
only polynomial dependence on the encoding length of $b$.
\end{proof}

The cases $k\le1$ and $d=1$ are immediate.

\section{Further extensions}
\label{sec:sparsity-extensions}

The proof in \cref{sec:algorithm} has a graph-independent part and a
graph-specific part.  The graph-independent part is the cheap-prefix
recursion: once we know a threshold such that every collection of sufficiently
large candidate sets has compatible representatives, the three cases of the
recursion and their cost argument are unchanged.  The graph-specific part is
only the estimate on conflicts between two candidate sets.  We isolate this
common calculation before deriving four sparsity consequences.

For possibly overlapping sets $U,V\subseteq V(G)$, let $G[U\to V]$ be the
bipartite graph with disjoint parts
\[
 U^{\mathrm L}=\{u^{\mathrm L}:u\in U\},
 \qquad
 V^{\mathrm R}=\{v^{\mathrm R}:v\in V\},
\]
where $u^{\mathrm L}v^{\mathrm R}$ is an edge exactly when $uv\in E(G)$.
The arrow distinguishes the left and right copies; it does not orient $G$.
If $|U|=|V|=\ell$, then $|E(G[U\to V])|$ is the number of ordered choices
$(u,v)\in U\times V$ that conflict by adjacency.  Exactly $|U\cap V|$ further
ordered choices conflict by equality.  Thus
\begin{equation}
 \chi_G(U,V)=|E(G[U\to V])|+|U\cap V|
 \label{eq:universal-conflict-count}
\end{equation}
is the exact number of adjacency-or-equality conflicts among the $\ell^2$
ordered choices.

Suppose that $r$ candidate sets, each of size $\ell$, are replaced by disjoint
copied blocks as in the proof of \cref{lem:sparse-transversal}.  If
\(\chi_G(U,V)\le \rho(\ell)\ell^2\) for every pair of blocks, then at most
\((r-1)\rho(\ell)\ell^2\) conflict edges cross any one block.  Therefore the
Wanless--Wood criterion applies whenever
\begin{equation}
 4(r-1)\rho(\ell)<1.
 \label{eq:universal-ww-certificate}
\end{equation}
The following lemma packages the universal analysis.

\begin{lemma}[Sparse-prefix transfer principle]
\label{lem:sparse-prefix-transfer}
Let $k\ge2$, let $G$ be a graph, and let $\mathcal I$ be a \WIT{} instance
with $k$ candidate sets on $G$.  Suppose that a function
$\gamma:\mathbb N\to\mathbb R_{\ge0}$ satisfies
\[
 |E(G[U\to V])|\le \gamma(\ell)\ell^2
\]
for every $\ell\in\mathbb N$ and all possibly overlapping
$U,V\subseteq V(G)$ with $|U|=|V|=\ell$.  Suppose the algorithm is also given,
for every $r\in\{2,\ldots,k\}$, a positive integer $L(r)$ satisfying
\begin{equation}
 4(r-1)\left(\gamma(L(r))+\frac1{L(r)}\right)<1,
 \label{eq:generic-prefix-certificate}
\end{equation}
and put $\widehat L(k)=\max_{2\le r\le k}L(r)$.  Then the decision version can
be solved deterministically, with a feasible tuple returned for every
yes-instance, in time
\[
 O\!\left((k\widehat L(k))^k|\mathcal I|^{O(1)}\right).
\]
Within the same bound, the optimization version returns a globally
minimum-cost transversal or reports that none exists.
\end{lemma}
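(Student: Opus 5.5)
The plan is to rerun the proof of \cref{thm:weighted} unchanged. The only modification is that the threshold $M(d,r)$ is replaced by the supplied integer $L(r)$. Accordingly, the first step is a generic analogue of \cref{lem:sparse-transversal}. Let $r\ge2$ and let $B_1,\ldots,B_r\subseteq V(G)$ be possibly overlapping sets, each of size $\ell=L(r)$. Build the copied $r$-partite conflict graph $H$ exactly as in that proof. For $i\ne j$, the edges between the blocks $C_i$ and $C_j$ are counted by $\chi_G(B_i,B_j)$ from \eqref{eq:universal-conflict-count}. The hypothesis on $\gamma$ gives $|E(G[B_i\to B_j])|\le\gamma(\ell)\ell^2$, and $|B_i\cap B_j|\le\ell=\ell^2/\ell$. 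Hence
\[
e_H(C_i,C_j)\le\bigl(\gamma(\ell)+1/\ell\bigr)\ell^2 .
\]
Summing over the $r-1$ other blocks and using \eqref{eq:generic-prefix-certificate} shows that each block has fewer than $\ell^2/4=\frac{\ell}{4}|C_i|$ crossing edges. \Cref{thm:wanless-wood} then yields distinct, pairwise nonadjacent representatives. This is \eqref{eq:universal-ww-certificate} with $\rho(\ell)=\gamma(\ell)+1/\ell$. No $K_{d,d}$-freeness is needed, because the edge bound is assumed directly for all pairs $U,V$, overlapping or not.

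Second, I restate \cref{lem:cheap-prefix} with $P_i$ the first $L(r)$ vertices of $A_i$, assuming $|A_i|\ge L(r)$ for all $i\in L$. Part (a) follows from the generic transversal step just proved. Part (b) is the sorted-cost lower bound, which is purely combinatorial and independent of the graph. The decision procedure is the boxed procedure with $M(d,r)$ replaced by $L(r)$ throughout. The invariants \eqref{inv:i1}--\eqref{inv:i3} and the update \eqref{eq:candidate-set-update} are unchanged. The proof of \cref{lem:recursion-correct} then goes through verbatim, since it used only the two conclusions of the dichotomy and the closed-neighborhood update.

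Third, I redo the running-time count. The recursion depth is at most $k$, since each branch fixes one label. A branching state with $r$ remaining labels has fewer than $L(r)$ children in Step~3 and at most $rL(r)$ children in Step~5. In both cases this is at most $k\widehat L(k)$. Using the maximum $\widehat L(k)$ matters here, because $L$ need not be monotone, unlike $M(d,\cdot)$. The tree therefore has $O((k\widehat L(k))^k)$ nodes, each processed in polynomial time. Witness recovery by self-reduction and optimization by scaling and binary search then carry over word for word from the proof of \cref{thm:weighted}, adding only polynomially many decision calls.

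The main obstacle is the first step, because the bound must cover equality conflicts as well as adjacency conflicts. The intersection term $|U\cap V|\le\ell$ is exactly what the summand $1/L(r)$ in \eqref{eq:generic-prefix-certificate} absorbs, and the inequality must be strict to match the Wanless--Wood hypothesis. Everything else is routine bookkeeping: the prefix length $L(r)$ is supplied as input, so computing prefixes requires no knowledge of $\gamma$.
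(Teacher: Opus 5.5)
Your proposal is correct and follows the paper's proof essentially step for step. You bound the conflicts between two copied blocks by $\gamma(\ell)\ell^2+\ell$ and apply the Wanless--Wood criterion, rerun the cheap-prefix recursion with $L(r)$ in place of $M(d,r)$, and bound the tree by depth $k$ and branching factor $k\widehat L(k)$, followed by the same self-reduction and scaling; your remark that $\widehat L(k)$ is needed because $L$ need not be monotone is a correct detail the paper leaves implicit.
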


\begin{proof}
For two $L(r)$-element candidate sets, adjacency contributes at most
$\gamma(L(r))L(r)^2$ conflicts and equality contributes at most $L(r)$.
Hence their total conflict proportion is at most
$\gamma(L(r))+1/L(r)$.  By~\eqref{eq:universal-ww-certificate}
and~\eqref{eq:generic-prefix-certificate}, every $r$ such sets have distinct,
pairwise nonadjacent representatives.

Run the decision procedure of \cref{sec:algorithm} with $L(r)$ in place of
$M(d,r)$ when $r$ labels remain.  A small candidate set is still enumerated
in full.  If all prefixes are full and collectively affordable, the preceding
certificate supplies a feasible completion.  If they are collectively too
expensive, the sorted-cost argument in \cref{lem:cheap-prefix}(b) still shows
that every feasible completion meets some prefix.  Thus the correctness proof
in \cref{lem:recursion-correct} applies verbatim.

The recursion has depth at most $k$ and branching factor at most
$k\widehat L(k)$.  Each state takes polynomial time, giving the displayed
bound.  The self-reduction and rational-cost scaling in the proof of
\cref{thm:weighted} give a witness and an optimum within the same
parameterized bound.
\end{proof}

\subsection{Edge count: an edge-count-sensitive XP algorithm}
\label{sec:edge-number}

The \emph{edge count} of $G$ is $m=|E(G)|$.  This is a global measure: it
does not prevent all $m$ edges from being concentrated in a comparatively
small vertex set.  Nevertheless, it improves over enumerating all $n^k$
destination tuples when $m\ll n^2$.

For $m\in\mathbb N_0$ and $r\ge2$, define
\begin{equation}
 L_{\mathrm e}(m,r)
 =8(r-1)+\left\lceil4\sqrt{m(r-1)}\right\rceil+1.
 \label{eq:edge-threshold}
\end{equation}

\begin{theorem}[Edge-count-sensitive algorithm]
\label{thm:edge-number}
Let $k\ge2$, and let $\mathcal I$ be a \WIT{} instance with $k$ candidate
sets whose underlying graph $G$ has $m$ edges.  A minimum-cost transversal can be found
deterministically, or nonexistence reported, in time
\[
  2^{O(k\log k)}(m+1)^{k/2}|\mathcal I|^{O(1)}.
\]
For \ISD{}, given an $n$-vertex, $m$-edge graph $G$ and a set
$S\in\binom{V(G)}{k}$, one can compute $\OPT(G,S)$ in time
\[
  2^{O(k\log k)}(m+1)^{k/2}(n+m)^{O(1)}.
\]
If the optimum is finite, the algorithm also returns an optimal independent
$k$-set and a shortest collision-free slide sequence; otherwise, it reports
that no independent $k$-set is reachable from $S$.
In particular, when $m=O(n)$ this is
$2^{O(k\log k)}n^{k/2+O(1)}$ time.
\end{theorem}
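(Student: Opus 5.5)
The plan is to apply \cref{lem:sparse-prefix-transfer} with a density function $\gamma$ that depends only on $m$, taking $L(r)=L_{\mathrm e}(m,r)$. The ISD part then follows from \cref{cor:algorithmic-transfer}, with $p=m$ treated as an input-dependent quantity, exactly as that corollary permits.

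For the conflict bound, let $U,V\subseteq V(G)$ be possibly overlapping with $|U|=|V|=\ell$. Each edge $uv$ of $G$ produces at most two edges of $G[U\to V]$, namely $u^{\mathrm L}v^{\mathrm R}$ and $v^{\mathrm L}u^{\mathrm R}$. Hence $|E(G[U\to V])|\le 2m$. It is also at most $\ell^2$, but the bound $2m$ is the useful one, so I set
\[
 \gamma(\ell)=\frac{2m}{\ell^2}.
\]
Then condition~\eqref{eq:generic-prefix-certificate} reads
\[
 4(r-1)\Bigl(\frac{2m}{L^2}+\frac1L\Bigr)<1 .
\]
Write $L=L_{\mathrm e}(m,r)$. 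Since $L>8(r-1)$, we get $4(r-1)/L<1/2$. Since $L>4\sqrt{m(r-1)}$, we get $L^2>16m(r-1)$, and therefore $8m(r-1)/L^2<1/2$. Adding the two bounds gives the strict inequality. When $m=0$ the second term vanishes, so this case is also covered. This verification is the only place where the constants in~\eqref{eq:edge-threshold} are used, and I would check it carefully. The rest is bookkeeping.

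For the running time, note that $L_{\mathrm e}(m,r)$ is nondecreasing in $r$, so $\widehat L(k)=L_{\mathrm e}(m,k)=O\bigl(k+\sqrt{(m+1)k}\bigr)=O\bigl(k\sqrt{m+1}\bigr)$. \Cref{lem:sparse-prefix-transfer} then gives time
\[
 \bigl(k\widehat L(k)\bigr)^k|\mathcal I|^{O(1)}\le \bigl(O(k^2)\bigr)^k(m+1)^{k/2}|\mathcal I|^{O(1)}=2^{O(k\log k)}(m+1)^{k/2}|\mathcal I|^{O(1)},
\]
and this covers both the decision and the optimization versions, including witness recovery.

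Finally, I apply \cref{cor:algorithmic-transfer} with $f(k,m)=2^{O(k\log k)}(m+1)^{k/2}$. This gives $\OPT(G,S)$, an optimal independent target, and a shortest collision-free slide sequence, or a report of nonexistence. The constructed \WIT{} instance has encoding length polynomial in $n+m$, so the total time is as stated. When $m=O(n)$, the factor $(m+1)^{k/2}$ becomes $O(n)^{k/2}$, and the constant $C^{k/2}$ this introduces is absorbed into $2^{O(k\log k)}$, giving $2^{O(k\log k)}n^{k/2+O(1)}$.

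The step I expect to be the main obstacle is the choice of $\gamma$: it has to be noticed that the trivial global bound $2m$ on cross edges is enough. Once that is in place, everything else is direct substitution into results already proved.
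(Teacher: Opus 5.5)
Your proposal is correct and follows the paper's proof essentially line by line. The paper makes the same three moves: the global bound $|E(G[U\to V])|\le 2m$ gives $\gamma(\ell)=2m/\ell^2$; the two halves $4(r-1)/L<1/2$ and $8m(r-1)/L^2<1/2$ verify \eqref{eq:generic-prefix-certificate} for $L=L_{\mathrm e}(m,r)$; and \cref{lem:sparse-prefix-transfer} together with \cref{cor:algorithmic-transfer} (with $p=m$) yields both bounds, with your treatment of $m=0$ and of the absorbed constant when $m=O(n)$ merely making explicit what the paper leaves implicit.
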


\begin{proof}[Proof of \Cref{thm:edge-number}]
Fix $U,V\subseteq V(G)$ with $|U|=|V|=\ell$.  An undirected edge $xy$ of
$G$ contributes at most the two ordered adjacency pairs $(x,y)$ and $(y,x)$
to $U\times V$.  Therefore
\[
 |E(G[U\to V])|\le2m,
 \qquad
 \gamma_{\mathrm e}(\ell)=\frac{2m}{\ell^2}.
\]
For each $r\in\{2,\ldots,k\}$, put $R=r-1$ and
$\ell=L_{\mathrm e}(m,r)$.  Then
$\ell>8R$ and $\ell>4\sqrt{mR}$, so
\[
 4R\left(\frac{2m}{\ell^2}+\frac1\ell\right)
 <\frac12+\frac12=1.
\]
Thus \cref{lem:sparse-prefix-transfer} applies.  The threshold is
nondecreasing in $r$, and
\[
 L_{\mathrm e}(m,k)=O\bigl(k+\sqrt{km}\bigr).
\]
Consequently,
\begin{align*}
 \bigl(kL_{\mathrm e}(m,k)\bigr)^k
 &\le 2^{O(k\log k)}(m+1)^{k/2},
\end{align*}
which proves the \WIT{} bound.  The \ISD{} statement follows from the
algorithmic transfer in \cref{cor:algorithmic-transfer}.  If $m=O(n)$, the
last display is $2^{O(k\log k)}n^{k/2}$, as claimed.
\end{proof}

\subsection{A direct proof for bounded-degeneracy graphs}
\label{sec:degeneracy}

A graph is \emph{$a$-degenerate} if every nonempty subgraph has a vertex of
degree at most $a$.  Equivalently, its vertices can be ordered so that every
vertex has at most $a$ neighbors later in the order.  In particular, every
$h$-vertex subgraph has at most $ah$ edges.

For $a\in\mathbb N_0$ and $r\ge2$, define
\begin{equation}
 L_{\mathrm{deg}}(a,r)=4(r-1)(4a+1)+1.
 \label{eq:degeneracy-threshold}
\end{equation}

\begin{theorem}[Direct bounded-degeneracy algorithm]
\label{thm:degenerate}
Let $a\in\mathbb N_0$ and $k\ge2$.  Given a \WIT{} instance $\mathcal I$
with $k$ candidate sets whose underlying graph $G$ is $a$-degenerate, one can
find a minimum-cost transversal, or report that none exists, in time
\[
  2^{O(k\log k)}(a+1)^k|\mathcal I|^{O(1)}.
\]
For \ISD{}, given an $n$-vertex, $m$-edge, $a$-degenerate graph $G$ and a set
$S\in\binom{V(G)}{k}$, one can compute $\OPT(G,S)$ in time
\[
  2^{O(k\log k)}(a+1)^k(n+m)^{O(1)}.
\]
If the optimum is finite, the algorithm also returns an optimal independent
$k$-set and a shortest collision-free slide sequence; otherwise, it reports
that no independent $k$-set is reachable from $S$.
Consequently, both problems are uniformly FPT in $k+a$.
\end{theorem}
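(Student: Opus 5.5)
The plan is to apply \cref{lem:sparse-prefix-transfer} with a degeneracy-based conflict function $\gamma$, and then transfer the result to \ISD{} via \cref{cor:algorithmic-transfer}.

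The first step is the local edge count. Fix possibly overlapping $U,V\subseteq V(G)$ with $|U|=|V|=\ell$. Every edge $u^{\mathrm L}v^{\mathrm R}$ of $G[U\to V]$ comes from an edge $uv$ of the induced subgraph $G[U\cup V]$. Each such undirected edge contributes at most two ordered pairs, namely $(u,v)$ and $(v,u)$. The set $U\cup V$ has at most $2\ell$ vertices, so $a$-degeneracy gives $|E(G[U\cup V])|\le 2a\ell$. Hence
\[
 |E(G[U\to V])|\le 4a\ell,
 \qquad
 \gamma_{\mathrm{deg}}(\ell)=\frac{4a}{\ell}.
\]
This is the only graph-specific input. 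Unlike the covering-family argument, it uses the edge bound only on the $2\ell$ vertices involved.

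Next, I verify the certificate \eqref{eq:generic-prefix-certificate} with $L(r)=L_{\mathrm{deg}}(a,r)$ from \eqref{eq:degeneracy-threshold}. Put $R=r-1$ and $\ell=L_{\mathrm{deg}}(a,r)=4R(4a+1)+1$. Then
\[
 4R\Bigl(\gamma_{\mathrm{deg}}(\ell)+\frac1\ell\Bigr)=\frac{4R(4a+1)}{\ell}<1,
\]
because $\ell>4R(4a+1)$. The threshold is nondecreasing in $r$, so $\widehat L(k)=L_{\mathrm{deg}}(a,k)=O(k(a+1))$. \Cref{lem:sparse-prefix-transfer} then gives decision, witness and optimization in time
\[
 O\bigl((k\widehat L(k))^k|\mathcal I|^{O(1)}\bigr)
 =\bigl(O(k^2(a+1))\bigr)^k|\mathcal I|^{O(1)}
 =2^{O(k\log k)}(a+1)^k|\mathcal I|^{O(1)}.
\]

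For \ISD{}, I apply \cref{cor:algorithmic-transfer} with $p=a$ and $\mathcal C_a$ the class of $a$-degenerate graphs. This yields $\OPT(G,S)$, together with an optimal target and a shortest collision-free sequence or a nonexistence report, within the same bound with $(n+m)^{O(1)}$ in place of $|\mathcal I|^{O(1)}$. Uniform FPT in $k+a$ is immediate, since $(a+1)^k\le 2^{k\log(a+1)}$.

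No step is a real obstacle. The one point needing care is the overlapping case in the edge count. Copies of a vertex lying in $U\cap V$ appear on both sides, so a single edge of $G$ can yield two cross-pairs. This is why the factor is $4a\ell$ rather than $2a\ell$. Equality conflicts are already accounted for by the $1/L(r)$ term in \eqref{eq:generic-prefix-certificate}.
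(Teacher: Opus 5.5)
Your proposal is correct and follows the paper's proof essentially step for step: the same $|E(G[U\to V])|\le 4a\ell$ count on $U\cup V$, the same threshold $L_{\mathrm{deg}}(a,r)$ substituted into \cref{lem:sparse-prefix-transfer}, and the same transfer to \ISD{} via \cref{cor:algorithmic-transfer}. The paper adds one detail you leave implicit. \cref{lem:sparse-prefix-transfer} requires the algorithm to be given the thresholds $L(r)$, so it needs $a$. This costs nothing, because the degeneracy of $G$ and a corresponding ordering can be computed in $O(n+m)$ time, so $a$ need not be supplied as input.
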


\begin{proof}[Proof of \Cref{thm:degenerate}]
Let $U,V\subseteq V(G)$ have size $\ell$ and put $W=U\cup V$.  Degeneracy
gives $|E(G[W])|\le a|W|\le2a\ell$.  Each edge of $G[W]$ contributes at most
two ordered pairs to $E(G[U\to V])$, and hence
\[
 |E(G[U\to V])|\le4a\ell,
 \qquad
 \gamma_{\mathrm{deg}}(\ell)=\frac{4a}{\ell}.
\]
For each $r\in\{2,\ldots,k\}$, set
$\ell=L_{\mathrm{deg}}(a,r)$.  Then
\[
 4(r-1)\left(\gamma_{\mathrm{deg}}(\ell)+\frac1\ell\right)
 =\frac{4(r-1)(4a+1)}{\ell}<1.
\]
Apply \cref{lem:sparse-prefix-transfer}.  Since
$L_{\mathrm{deg}}(a,k)=O((a+1)k)$, its running time becomes
\[
 O\!\left((kL_{\mathrm{deg}}(a,k))^k|\mathcal I|^{O(1)}\right)
 =2^{O(k\log k)}(a+1)^k|\mathcal I|^{O(1)}.
\]
The \ISD{} clause again follows from \cref{cor:algorithmic-transfer}.
A graph's degeneracy and a corresponding ordering can be computed in
$O(n+m)$ time, so the value $a$ need not be supplied separately.
\end{proof}

Fellows et al.~\cite{FellowsEtAl2026} previously established FPT in $k$ on
every fixed bounded-degeneracy class using independence-covering families.
The argument above is a short and structurally different proof: one local
edge count feeds directly into the cheap-prefix recursion, handles
coordinate-dependent costs and overlapping candidate sets, and makes the
dependence on $a$ explicit.  It is also quantitatively sharper than applying
our biclique-free theorem as a black box.  For $a\ge1$, an $a$-degenerate
graph is $K_{a+1,a+1}$-free, but substituting $d=a+1$ in \cref{thm:main}
gives the factor $2^{O((a+1)k\log k)}$, whereas the direct analysis gives
$2^{O(k\log k+k\log(a+1))}$.

\subsection{Bounded \texorpdfstring{$s$}{s}-codegree}
\label{sec:codegree}

For an integer $s\ge1$, define the \emph{$s$-codegree} of a graph $G$ by
\begin{equation}
 \lambda_s(G)=
 \max_{X\in\binom{V(G)}{s}}\left|\bigcap_{x\in X}N_G(x)\right|,
 \label{eq:s-codegree}
\end{equation}
with $\lambda_s(G)=0$ when $|V(G)|<s$.  Thus $\lambda_1(G)$ is the maximum
degree $\Delta(G)$, and $\lambda_2(G)$ is the usual maximum codegree, the
largest number of common neighbors of a pair of vertices.

Fix $q\in\mathbb N_0$ and assume that $\lambda_s(G)\le q$.  The asymmetric
K\H{o}v\'ari--S\'os--Tur\'an bound~\cite{KovariSosTuran1954} says that a
bipartite graph with $\ell$ vertices in each part and with no
$K_{s,q+1}$ having its $s$-vertex side in the left part has at most
\begin{equation}
 q^{1/s}\ell^{2-1/s}+(s-1)\ell
 \label{eq:asymmetric-kst}
\end{equation}
edges.  For any $\ell$-element sets $U,V\subseteq V(G)$, the copied graph
$G[U\to V]$ satisfies this hypothesis.  Indeed, the $s$ left vertices of a
copied $K_{s,q+1}$ would project to $s$ distinct vertices of $G$ with at
least $q+1$ common neighbors.  No projected vertex can occur on both sides,
because the corresponding copied cross-edge would require a loop in $G$.

For $r\ge2$, define
\begin{equation}
 L_{s,q}(r)=q\bigl(8(r-1)\bigr)^s+8s(r-1)+1.
 \label{eq:codegree-threshold}
\end{equation}

\begin{theorem}[Bounded $s$-codegree]
\label{thm:codegree}
Let $k\ge2$, $s\ge1$, and $q\ge0$ be integers.  Given $s,q$ and a \WIT{}
instance $\mathcal I$ with $k$ candidate sets whose underlying graph $G$ is
promised to satisfy $\lambda_s(G)\le q$, one can find a minimum-cost
transversal, or report that none exists, in time
\[
  2^{O(sk\log k+k\log(q+1))}|\mathcal I|^{O(1)}.
\]
For \ISD{}, given $s,q$, an $n$-vertex, $m$-edge graph $G$ promised to
satisfy $\lambda_s(G)\le q$, and a set $S\in\binom{V(G)}{k}$, one can compute
$\OPT(G,S)$ in time
\[
  2^{O(sk\log k+k\log(q+1))}(n+m)^{O(1)}.
\]
If the optimum is finite, the algorithm also returns an optimal independent
$k$-set and a shortest collision-free slide sequence; otherwise, it reports
that no independent $k$-set is reachable from $S$.
Thus the problems are FPT in $k+s+q$ and, for fixed $s$, in $k+q$.
\end{theorem}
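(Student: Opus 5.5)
We derive the result from \cref{lem:sparse-prefix-transfer} with the choice
\[
 \gamma(\ell)=q^{1/s}\ell^{-1/s}+\frac{s-1}{\ell}.
\]
This is justified by the asymmetric K\H{o}v\'ari--S\'os--Tur\'an bound~\eqref{eq:asymmetric-kst}. For any $\ell$-element sets $U,V\subseteq V(G)$, the projection argument above shows that the copied graph $G[U\to V]$ contains no $K_{s,q+1}$ with its $s$-side on the left. Hence $|E(G[U\to V])|\le\gamma(\ell)\ell^2$ for all $\ell$, which is the first hypothesis of the lemma. The degenerate case $q=0$ gives $\gamma(\ell)=(s-1)/\ell$, and the argument below covers it unchanged. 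For $s=1$ we interpret $\lambda_1=\Delta$, and the KST bound degenerates correctly to $q\ell$ edges.

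Next we verify~\eqref{eq:generic-prefix-certificate} with $L(r)=L_{s,q}(r)$. Put $R=r-1$ and $\ell=L_{s,q}(r)$. The quantity to bound is
\[
 4R\left(q^{1/s}\ell^{-1/s}+\frac{s}{\ell}\right).
\]
We split it into two halves. Since $\ell>q(8R)^s$, we have $q^{1/s}\ell^{-1/s}<1/(8R)$, so the first term contributes less than $1/2$. Since $\ell>8sR$, we have $s/\ell<1/(8R)$, so the second term contributes less than $1/2$ as well. If $q=0$, the first term simply vanishes. Thus the certificate holds strictly, and this calculation is simpler than \cref{lem:explicit-threshold} because the threshold~\eqref{eq:codegree-threshold} was chosen additively with slack.

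It remains to compute the running time. $L_{s,q}(r)$ is nondecreasing in $r$, so $\widehat L(k)=L_{s,q}(k)\le (q+1)(8k)^s+8sk+1$. Therefore
\[
 \log\bigl(k\widehat L(k)\bigr)=O\bigl(s\log k+\log(q+1)+\log s\bigr).
\]
The additive $\log s$ term is absorbed because $s\log k\ge s\ge\log s$ when $k\ge2$. Raising to the $k$-th power gives $2^{O(sk\log k+k\log(q+1))}|\mathcal I|^{O(1)}$ for both the decision and optimization versions of \WIT{}. The \ISD{} statement, including the optimal target and the shortest slide sequence, then follows from \cref{cor:algorithmic-transfer} with parameter $p=(s,q)$, since $s$ and $q$ are given as input.

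The only delicate point is confirming that the asymmetric KST hypothesis is exactly satisfied by $G[U\to V]$. The concern is overlapping $U,V$: a copied biclique could in principle reuse a vertex on both sides. This is excluded by the loop argument already given in the text. The left $s$ copies also project to distinct vertices, since $U$ is a set. Everything else is routine arithmetic.
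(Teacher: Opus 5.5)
Your proposal is correct and follows the paper's proof essentially step for step. It uses the same density function $\gamma(\ell)=q^{1/s}\ell^{-1/s}+(s-1)/\ell$ from the asymmetric K\H{o}v\'ari--S\'os--Tur\'an bound, the same threshold $L_{s,q}(r)$ with the same two-halves estimate, and the same appeal to \cref{lem:sparse-prefix-transfer} followed by \cref{cor:algorithmic-transfer}. Bounding $\widehat L(k)$ by $(q+1)(8k)^s+8sk+1$ and explicitly absorbing the $\log s$ term are harmless bookkeeping refinements.
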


The promise $\lambda_s(G)\le q$ need not be verified.

\begin{proof}[Proof of \Cref{thm:codegree}]
Dividing~\eqref{eq:asymmetric-kst} by $\ell^2$ gives the adjacency-density
bound
\[
 \gamma_{s,q}(\ell)
 =q^{1/s}\ell^{-1/s}+\frac{s-1}{\ell}.
\]
After equality conflicts are included, the expression
in~\eqref{eq:generic-prefix-certificate} is
\[
 q^{1/s}\ell^{-1/s}+\frac{s}{\ell}.
\]
For each $r\in\{2,\ldots,k\}$, put $R=r-1$ and
$\ell=L_{s,q}(r)$.  If $q>0$, then
$\ell>q(8R)^s$ and hence
$q^{1/s}\ell^{-1/s}<1/(8R)$; for $q=0$, this term is zero.  Also
$\ell>8sR$, so $s/\ell<1/(8R)$.  It follows that
\[
 4R\left(q^{1/s}\ell^{-1/s}+\frac{s}{\ell}\right)<1.
\]
The sparse-prefix transfer principle applies.  Since the threshold is
nondecreasing in $r$ and
\[
 L_{s,q}(k)\le q(8k)^s+8sk+1,
\]
we obtain
\[
 \bigl(kL_{s,q}(k)\bigr)^k
 =2^{O(sk\log k+k\log(q+1))}.
\]
This proves the static bound, and \cref{cor:algorithmic-transfer} gives the
\ISD{} bound.
\end{proof}

The case $s=1$ deserves emphasis:~\eqref{eq:s-codegree} becomes
$\lambda_1(G)=\Delta(G)$, and~\eqref{eq:codegree-threshold} is
$O((q+1)r)$.  Thus bounded maximum degree is exactly the first member of the
$s$-codegree hierarchy, not a separate case.  For $s=2$ and $q\ge1$, the
threshold is $O(qr^2+r)$, which can be substantially smaller than treating the
graph only through the balanced biclique $K_{q+1,q+1}$ that it excludes.

\subsection{\texorpdfstring{$K_{s,t}$}{K(s,t)}-free graphs}
\label{sec:unbalanced-biclique}

For integers $1\le s\le t$, the complete bipartite graph $K_{s,t}$ has one
part of size $s$, one part of size $t$, and every possible edge between the
two parts.  A graph is \emph{$K_{s,t}$-free} if it contains no such graph as
a subgraph; the forbidden copy need not be induced.  By the definition of a
common neighborhood,
\begin{equation}
 G\text{ is }K_{s,t}\text{-free}
 \quad\Longleftrightarrow\quad
 \lambda_s(G)\le t-1.
 \label{eq:biclique-codegree-equivalence}
\end{equation}

\begin{corollary}[Unbalanced biclique exclusion]
\label{thm:unbalanced-biclique}
Let $k\ge2$ and let $1\le s\le t$ be integers.  Given a \WIT{} instance
$\mathcal I$ with $k$ candidate sets whose underlying graph $G$ is promised to be
$K_{s,t}$-free, one can find a minimum-cost transversal, or report that none
exists, in time
\[
  2^{O(sk\log k+k\log t)}|\mathcal I|^{O(1)}.
\]
For \ISD{}, given an $n$-vertex, $m$-edge graph $G$ promised to be
$K_{s,t}$-free and a set $S\in\binom{V(G)}{k}$, one can compute $\OPT(G,S)$
in time
\[
  2^{O(sk\log k+k\log t)}(n+m)^{O(1)}.
\]
If the optimum is finite, the algorithm also returns an optimal independent
$k$-set and a shortest collision-free slide sequence; otherwise, it reports
that no independent $k$-set is reachable from $S$.
In particular, both problems are FPT in $k+s+t$ and in $k$ for every fixed
pair $(s,t)$.
\end{corollary}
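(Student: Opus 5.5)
This follows directly from \cref{thm:codegree} via the equivalence \eqref{eq:biclique-codegree-equivalence}. The plan is to set $q=t-1$ and reduce the $K_{s,t}$-free promise to an $s$-codegree promise.

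The first step is to justify \eqref{eq:biclique-codegree-equivalence}. Suppose some $s$-set $X$ has at least $t$ common neighbours. Those neighbours lie outside $X$, since $N_G(x)$ excludes $x$ and the graph has no loops. So $X$ together with $t$ of these neighbours spans a (not necessarily induced) $K_{s,t}$. Conversely, the $s$-side of any copy of $K_{s,t}$ is an $s$-set with at least $t$ common neighbours. Hence the promise of $K_{s,t}$-freeness is exactly the promise $\lambda_s(G)\le t-1$. The algorithm can set $q=t-1$ from the given $s$ and $t$, and nothing needs to be verified, just as in \cref{thm:codegree}.

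The second step is to invoke \cref{thm:codegree} with this $s$ and $q=t-1\ge0$. This gives running time $2^{O(sk\log k+k\log(q+1))}|\mathcal I|^{O(1)}=2^{O(sk\log k+k\log t)}|\mathcal I|^{O(1)}$ for the \WIT{} optimization version. The \ISD{} clause, including the optimal target and the shortest collision-free slide sequence or the report of nonexistence, is given in the same bound either by \cref{thm:codegree} itself or by \cref{cor:algorithmic-transfer}.

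The final step is the FPT consequences. For fixed $(s,t)$ the exponent is $O(k\log k)$. In $k+s+t$ the exponent is bounded by a function of the parameters alone. There is no real obstacle; the only point needing care is that the hypothesis $s\le t$ is not used beyond fixing which side enters the exponent. One should also note that $t\ge1$ ensures $\log t$ is well defined, so that $\log(q+1)=\log t$.
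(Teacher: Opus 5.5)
Your proposal is correct and follows the paper's proof: you set $q=t-1$, use the equivalence \eqref{eq:biclique-codegree-equivalence} to turn the $K_{s,t}$-free promise into $\lambda_s(G)\le t-1$, and invoke \cref{thm:codegree}, where $\log(q+1)=\log t$ gives the stated exponent. Your explicit check of that equivalence, noting that common neighbours of $X$ lie outside $X$ because $N_G(x)$ excludes $x$, spells out what the paper leaves to the definition of a common neighbourhood.
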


\begin{proof}[Proof of \Cref{thm:unbalanced-biclique}]
Apply \cref{thm:codegree} with $q=t-1$.  Equivalently, use the explicit
threshold, for every $r\in\{2,\ldots,k\}$,
\[
 L_{s,t}^{\mathrm{bic}}(r)
 =(t-1)\bigl(8(r-1)\bigr)^s+8s(r-1)+1.
\]
Its logarithm is $O(s\log k+\log t)$ for $r\le k$, so the search-tree factor
is $2^{O(sk\log k+k\log t)}$.  The equivalence
in~\eqref{eq:biclique-codegree-equivalence} proves that the promise needed by
\cref{thm:codegree} holds.
\end{proof}

For $t\ge2$, the direct asymmetric analysis strengthens the bound obtained by
treating a $K_{s,t}$-free graph as merely $K_{t,t}$-free.  The latter
observation already gives FPT in $k$ for fixed $s,t$, but it places $t$ in the
extremal exponent;
\cref{thm:unbalanced-biclique} instead puts the smaller side $s$ in that
exponent, while the larger side enters only linearly in the prefix threshold.

\subsection{Weighted movement on a separate graph}

The proof uses biclique exclusion only for terminal conflicts.  The graph in
which tokens move may be different.

\begin{corollary}[Two-graph, weighted movement]
\label{cor:two-graph}
Let $d,k\ge2$ be integers.  Let $G_{\mathrm f}$ be an undirected graph
promised to be $K_{d,d}$-free, and let $G_{\mathrm m}$ be an arbitrary
directed graph on a common $n$-vertex set $V$.
Let $w:E(G_{\mathrm m})\to\mathbb Q_{\ge0}$ be a binary-encoded arc-cost
function shared by all tokens, and let
$S=\{s_1,\ldots,s_k\}\subseteq V$.  A move sends one token along an arc of
$G_{\mathrm m}$ to an unoccupied vertex, and only the terminal configuration
must be independent in $G_{\mathrm f}$.  The total movement cost of a sequence
is the sum, with multiplicity, of the costs of its traversed arcs.  Given
$\mathcal J=(d,G_{\mathrm f},G_{\mathrm m},w,S)$, one can deterministically find
a $G_{\mathrm f}$-independent terminal $k$-set and a collision-free sequence
that lexicographically minimizes
\[
 (\text{total movement cost},\ \text{number of slides})
\]
over all such sequences, in time
$2^{O(dk\log k)}|\mathcal J|^{O(1)}$.  If no such terminal configuration is
reachable, the algorithm reports nonexistence.  Here $|\mathcal J|$ is the
total binary encoding length.
\end{corollary}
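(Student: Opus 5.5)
Within the established framework the plan is: generalize \cref{lem:assignment} to directed, weighted movement; fold the lexicographic objective into a single scalar cost; then invoke \cref{thm:weighted} and \cref{cor:algorithmic-transfer} essentially verbatim.

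\emph{Scalarizing the objective.} Replace each arc cost by the pair $(w(e),1)$, compared lexicographically. Clear denominators with a common denominator $D$, and choose an integer $N$ larger than the maximum possible number of slides in an optimal sequence; I expect $N=kn+1$ to suffice, because each token follows a simple path. Then use the scalar cost $w'(e)=N\cdot Dw(e)+1$. Let $\delta(u,v)$ be the $w'$-shortest-path distance in $G_{\mathrm m}$, computed by Dijkstra from each of the $k$ sources; it is $+\infty$ if $v$ is unreachable. Since all bit lengths stay polynomial, minimizing total $w'$-cost is the same as lexicographically minimizing (movement cost, number of slides), provided sequences of minimal $w'$-cost use at most $N-1$ slides. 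That proviso follows from the assignment lemma below: an optimal sequence moves each token along a $w'$-shortest path, and such a path is simple because every $w'>0$.

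\emph{The directed weighted assignment lemma (main obstacle).} The claim is that the minimum $w'$-cost of a collision-free sequence from $Q$ to $T$ equals $\min_\pi\sum_{u\in Q}\delta(u,\pi(u))$.

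The lower bound is immediate: label the tokens and note that each token's walk costs at least the distance to its destination.

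For the upper bound, copy the inductive proof of \cref{lem:assignment}. Pick an unoccupied $t\in T\setminus Q$ with assigned token $s$. Follow a $\delta$-shortest $s\to t$ path to its first unoccupied vertex $v$, and let $u$ be the predecessor of $v$. If $u\ne s$, swap destinations between the tokens at $s$ and $u$. The only change from the undirected case is the inequality needed for the swap:
\[
\delta(u,t)+\delta(s,\pi(u))\le\delta(s,t)+\delta(u,\pi(u)).
\]
In the directed setting this holds because $u$ lies on a shortest $s\to t$ path, so $\delta(s,t)=\delta(s,u)+\delta(u,t)$, and the directed triangle inequality gives $\delta(s,\pi(u))\le\delta(s,u)+\delta(u,\pi(u))$. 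Both steps use only forward paths, so the swap works without symmetry.

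Next, slide $u\to v$ at cost $w'(uv)$. This reduces the assignment value by exactly $w'(uv)$. The reverse inequality uses $\delta(u,x)\le w'(uv)+\delta(v,x)$, which is again a forward-direction triangle inequality.

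The induction runs on the finite multiset of remaining distances rather than on an integer cost. It terminates because each step strictly decreases the integer potential, as $w'\ge1$. Zero-cost arcs in $w$ are therefore harmless.

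\emph{Assembling the algorithm.} Build the \WIT{} instance on $G_{\mathrm f}$ with $A_i=\{v:\delta(s_i,v)<\infty\}$ and $c_i=\delta(s_i,\cdot)$. The analogue of \cref{cor:static-formulation} follows exactly as before from the lemma above, so this instance's optimum equals the optimal sequence cost.

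Running the optimization version of \cref{thm:weighted} on $G_{\mathrm f}$, which is $K_{d,d}$-free, takes $2^{O(dk\log k)}|\mathcal J|^{O(1)}$ time. It returns an optimal tuple or reports nonexistence.

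The reconstruction in \cref{cor:algorithmic-transfer} adapts with Dijkstra predecessor pointers in place of BFS. It outputs a sequence whose $w'$-cost equals the optimum. Decoding then recovers the movement cost as $\lfloor\text{cost}/N\rfloor/D$ and the number of slides as the residue of the cost modulo $N$, which confirms that the sequence is lexicographically optimal.
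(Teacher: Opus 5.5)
Your route uses the same ingredients as the paper: the directed first-unoccupied-vertex exchange, the scalar length $C\,w(e)+1$ with $C>k(n-1)$, and a call to \cref{thm:weighted}. The difference is that you scalarize \emph{before} proving the assignment lemma, whereas the paper proves it for the lexicographic lengths $(w(e),1)$ and scalarizes afterwards. Your order makes the lemma cleaner (strictly positive integer weights, an integer potential). However, it moves all the lexicographic content into the claim that minimizing $w'$ equals lexicographically minimizing (movement cost, slides), and your justification of that claim does not work.

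The proviso you state, that sequences of minimal $w'$-cost use at most $N-1$ slides, is the wrong condition. It says nothing about a competitor with strictly smaller movement cost $a$ that uses at least $N$ slides. Such a competitor has $w'$-cost at least $N(a+1)$, so $w'$-minimization can miss it even though it wins lexicographically. Abstractly, with $N=10$ the pairs $(1,100)$ and $(2,0)$ have scalar values $110$ and $20$. Noting that the $w'$-optimal sequence follows simple $w'$-shortest paths bounds only \emph{its own} slide count, so it cannot exclude this. For the same reason, your final decoding step reads off a pair but does not confirm optimality. It also presupposes that the output has fewer than $N$ slides, which needs the argument below.

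The missing ingredient is a shortcut bound for \emph{arbitrary} sequences. Write $\mu'(Q,T)=\min_\pi\sum_{u\in Q}\delta(u,\pi(u))$ for your $w'$-assignment value. Take any collision-free sequence from $S$ to an independent $T$ with scaled movement cost $a$ and induced bijection $\sigma$. For each $u$, choose a $Dw$-shortest $u\to\sigma(u)$ path. It can be taken simple, since deleting a cycle never increases a nonnegative cost. Its movement cost is at most that of the token's walk, and it has at most $n-1$ arcs. Hence $\mu'(S,T)\le Na+k(n-1)<N(a+1)$, and your assignment lemma realizes $\mu'(S,T)$ by an actual sequence.

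Now let $(a^*,b^*)$ be lexicographically optimal and let $W$ be the $w'$-optimum. Every sequence has $w'$-cost at least $Na^*$, so $W\ge Na^*$. The optimal sequence itself gives $W\le Na^*+b^*$. The shortcut bound gives $W<N(a^*+1)$. Therefore every $w'$-optimal sequence has movement cost exactly $a^*$ and $W-Na^*\le b^*$ slides, so it is lexicographically optimal. Its slide count is also below $N$, which legitimizes your decoding.

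In the paper, this role is played by proving the assignment lemma directly for lexicographic lengths and noting that lexicographically shortest paths are simple. That bounds the second coordinate of the optimum by $k(n-1)$ before any scalarization takes place.
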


\begin{proof}
Multiply all arc costs by the product of their denominators.  This produces
nonnegative integer costs with polynomial encoding length and does not change
the lexicographic optimum; henceforth $w$ denotes these scaled integer costs.
Give an arc $e$ the two-component length
\(\ell(e)=(w(e),1)\), where the first component records cost and the second
records the number of slides.  Vectors are added componentwise and compared
lexicographically; $+\infty$ is larger than every finite vector and is
absorbing under addition.  For $u,v\in V$, let
\(\delta(u,v)\in\mathbb N_0^2\cup\{+\infty\}\) be the lexicographically
minimum length of a directed $u$--$v$ path, or $+\infty$ if no such path
exists.  For two $k$-sets $Q,T$, define
\[
 \mu^{\mathrm{lex}}(Q,T)=
 \min_{\pi:Q\to T\text{ bijective}}
 \sum_{u\in Q}\delta(u,\pi(u)).
\]

We first show that this assignment value equals the lexicographically minimum
cost--slide pair of a collision-free sequence.  Every slide sequence induces
a source--target bijection and therefore has value at least
$\mu^{\mathrm{lex}}(Q,T)$.  Thus no sequence
exists when $\mu^{\mathrm{lex}}(Q,T)=+\infty$.  Suppose henceforth that
$\mu^{\mathrm{lex}}(Q,T)<+\infty$.  If $Q=T$, the empty sequence attains this
value.  Otherwise, fix an optimal bijection $\pi$, choose
$t\in T\setminus Q$, and put $s=\pi^{-1}(t)$.
Follow a lexicographically shortest directed $s$--$t$ path until its first
unoccupied vertex $v$, and let $u$ be the occupied predecessor of $v$.

If $u\ne s$, put $t_u=\pi(u)$ and exchange the destinations of the tokens at
$s$ and $u$.  The directed triangle inequality and the shortest-path equality
$\delta(s,t)=\delta(s,u)+\delta(u,t)$ give
\begin{align*}
 \delta(u,t)+\delta(s,t_u)
 &\le \delta(u,t)+\delta(s,u)+\delta(u,t_u)\\
 &=\delta(s,t)+\delta(u,t_u).
\end{align*}
Thus the exchange cannot increase the assignment value.  Since the original
assignment was optimal, the exchanged one is also optimal and assigns the
token at $u$ to $t$.  If $u=s$, the same conclusion holds without an
exchange.

Now slide the token from $u$ to $v$, obtaining
$Q'=(Q\setminus\{u\})\cup\{v\}$.  Because the $u$--$t$ path begins with the
arc $(u,v)$ and its suffix is shortest, replacing source $u$ by $v$ in the
current assignment gives
\[
 \ell(u,v)+\mu^{\mathrm{lex}}(Q',T)
 \le \mu^{\mathrm{lex}}(Q,T).
\]
Conversely, in any assignment from $Q'$ to $T$, replace source $v$ by $u$
and keep its destination.  The directed triangle inequality gives
\[
 \mu^{\mathrm{lex}}(Q,T)
 \le \ell(u,v)+\mu^{\mathrm{lex}}(Q',T).
\]
Hence
$\mu^{\mathrm{lex}}(Q,T)=\ell(u,v)+\mu^{\mathrm{lex}}(Q',T)$.
Lexicographic order on $\mathbb N_0^2$ is
well-founded, and the second component decreases even when $w(u,v)=0$.
Repeating the argument therefore constructs a collision-free sequence whose
cost--slide pair is exactly $\mu^{\mathrm{lex}}(Q,T)$.

Every lexicographically shortest path is simple: deleting a directed cycle
does not increase its cost and strictly decreases its number of arcs.  Thus
each assigned path uses at most $n-1$ arcs, and the total secondary component
is at most $k(n-1)$.  Set $C=k(n-1)+1$.  For every arc, the scalar length
\[
 Cw(e)+1
\]
preserves the lexicographic order exactly.  Indeed, for any collection of $k$
simple directed paths, the total number $b$ of arcs lies in $[0,C-1]$.  If
$(a,b)$ and $(a',b')$ are the cost--arc vectors of two such collections, then
$Ca+b<Ca'+b'$ if and only if $(a,b)$ is lexicographically smaller than
$(a',b')$: when $a<a'$, integrality gives
$Ca+b\le Ca+C-1<C(a+1)\le Ca'+b'$, and when $a=a'$, the scalar comparison is
exactly the comparison of $b$ and $b'$.  A shortest-path computation from each
source therefore supplies the finite candidate set $A_i$ of
vertices reachable from $s_i$ and, for every $v\in A_i$, the coordinate cost
\[
 c_i(v)=\min_{P:s_i\leadsto v}\sum_{e\in E(P)}\bigl(Cw(e)+1\bigr),
\]
where the minimum ranges over directed $s_i$--$v$ paths in $G_{\mathrm m}$.

Apply \cref{thm:weighted} to these candidate sets in the feasibility graph
$G_{\mathrm f}$.  If no transversal exists, then no reachable independent
terminal configuration exists.  Otherwise, the selected destinations are
distinct and independent in $G_{\mathrm f}$.  For the resulting target set
$T$, a lexicographically cheaper source--destination bijection would itself
define a cheaper feasible \WIT{} tuple, contradicting optimality.  Hence the
selected pairing attains $\mu^{\mathrm{lex}}(S,T)$, and the directed
first-unoccupied-vertex construction above recovers an optimal collision-free
sequence.  The scaling, shortest paths, and reconstruction all have
polynomial bit complexity and polynomial running time outside the call to
\cref{thm:weighted}, which proves the stated bound.
\end{proof}

\end{document}